\documentclass[letterpaper,conference,compsoc]{IEEEtran}
\def\version{SP}
\newif \ifsubmission \submissionfalse

    \newif \iffull 
\newif \ifACM
\newif \ifUSENIX
\newif \ifIEEE
\newif \ifLNCS

\newif \ifCCS
\newif \ifSP
\newif \ifNDSS
\newif \ifCrypto
\newif \ifFC

\def\fullstring{full}
\def\ACMstring{ACM}
\def\USENIXstring{USENIX}
\def\IEEEstring{IEEE}
\def\LNCSstring{LNCS}

\def\CCSstring{CCS}
\def\SPstring{SP}
\def\NDSSstring{NDSS}
\def\Cryptostring{CRYPTO}
\def\FCstring{FC}

\ifx \version\fullstring \fulltrue \fi
\ifx \version\ACMstring \ACMtrue \fi
\ifx \version\USENIXstring \USENIXtrue \fi
\ifx \version\IEEEstring \IEEEtrue \fi
\ifx \version\LNCSstring \LNCStrue \fi

\ifx \version\CCSstring \ACMtrue \CCStrue \fi
\ifx \version\SPstring \IEEEtrue \SPtrue \fi
\ifx \version\NDSSstring \IEEEtrue \NDSStrue \fi
\ifx \version\Cryptostring \LNCStrue \Cryptotrue \fi
\ifx \version\FCstring \LNCStrue \FCtrue \fi

\ifACM \input{Conferences/ACM/acm.tex} \fi
\ifUSENIX \input{Conferences/USENIX/usenix.tex} \fi
\ifIEEE \errorcontextlines=200
\usepackage[noadjust]{cite}
\usepackage[scaled=0.88]{helvet}

 \fi
\ifLNCS \input{Conferences/LNCS/lncs.tex} \fi

\newif \ifcomments \commentsfalse
\newif \ifanon \anonfalse
\newif \ifwipnocirc \wipnocircfalse

\ifUSENIX \else \usepackage[table]{xcolor} \fi
\usepackage{xurl}
\usepackage{xspace}
\usepackage{hyperref}
\usepackage{amsmath}
\hypersetup{
    colorlinks=true,
    allcolors=blue,
    linkcolor=red,
    filecolor=magenta,
}
\usepackage[many]{tcolorbox}        
\usepackage{lipsum}
\usepackage{cleveref}
\usepackage{enumitem}
\usepackage{booktabs}
\usepackage{enumitem}
\usepackage{makecell}
\usepackage{multirow}
\usepackage[figuresright]{rotating}
\usepackage{array}
\usepackage{nicematrix}
\usepackage{tikz}
\usepackage{subcaption}
\usepackage{booktabs}
\usepackage{siunitx}
\usepackage{algorithm}
\usepackage{algpseudocode}
\usepackage[cal=cm]{mathalfa}
\usepackage{mdframed}
\usepackage{comment}
\usepackage{tcolorbox}
\usepackage{diagbox}
\usepackage{eso-pic}
\usepackage{xcolor}

\setlist[itemize]{itemsep=0.2em, parsep=0pt, topsep=0.3em}

\usepackage{titlesec}
\titleformat{\paragraph}[runin]{\normalfont\bfseries}{\theparagraph}{0.75em}{}
\titlespacing*{\paragraph}{0pt}{1ex}{0.75em} %

\algtext*{EndFor}
\algtext*{EndIf}

\usetikzlibrary{shapes, arrows.meta, positioning, calc, math, tikzmark, fit}

\usepackage{amsfonts}
\ifLNCS\else\usepackage{amsthm}\fi
\ifACM\else\usepackage{amssymb}\fi

\newcommand{\protbox}[2]{\fbox{\small\hbox{\begin{minipage}{0.97\columnwidth}\begin{center}{\bf #1}\end{center}#2\end{minipage}}}}

\definecolor{ForestGreen}{RGB}{34,139,34}

\ifcomments
    \newcommand{\mahimna}[1]{\textsf{\small{\color{violet!80}{[Mahimna: {#1}]}}}}
    \newcommand{\kushal}[1]{\textsf{\small{\color{blue}{[Kushal: {#1}]}}}}
    \newcommand{\james}[1]{\textsf{\small{\color{green!75!black}{[James: {#1}]}}}}
    \newcommand{\ari}[1]{\textsf{\small{\color{red}{[Ari: {#1}]}}}}
    \newcommand{\jay}[1]{\textsf{\small{\color{orange}{[Jay: {#1}]}}}}
    \newcommand{\sarah}[1]{\textsf{\small{\color{red}{[Sarah: {#1}]}}}}
    \newcommand{\andres}[1]{\textsf{\small{\color{blue}{[Andres: {#1}]}}}}
    
    \newcommand{\dani}[1]{\textsf{\small{\color{purple}{[Dani: {#1}]}}}}
    \newcommand{\sam}[1]{\textsf{\small{\color{yellow!75!black}{[Sam: {#1}]}}}}
    \newcommand{\paddy}[1]{\textsf{\small{\color{pink}{[Paddy: {#1}]}}}}
\else
    \newcommand{\kushal}[1]{}
    \newcommand{\mahimna}[1]{}
    \newcommand{\james}[1]{}
    \newcommand{\ari}[1]{}
    \newcommand{\jay}[1]{}
    \newcommand{\sarah}[1]{}
    \newcommand{\andres}[1]{}
    \newcommand{\dani}[1]{}
    \newcommand{\sam}[1]{}
    \newcommand{\paddy}[1]{}
\fi

\tcbset{
    sharp corners,
    colback = white,
    before skip = 0.2cm,    
    after skip = 0.5cm      
}                           %

\newtcolorbox{boxA}{
    fontupper = \bf,
    boxrule = 1.5pt,
    colframe = black %
}

\newcommand{\mypara}[1]{\smallskip\noindent\textbf{#1}\;}

\ifLNCS\else

\newtheorem{theorem}{Theorem}

\newtheorem{definition}[theorem]{Definition}

\theoremstyle{remark}

\fi

\definecolor{keyFindingColor}{HTML}{4A90E2}   %
\definecolor{keyFindingBackground}{HTML}{e9edf5} %

\usepackage[many]{tcolorbox}
\usepackage{enumitem}

\newlist{keyfindings}{itemize}{1}
\setlist[keyfindings,1]{%
  label=\(\triangleright\), %
  leftmargin=1.4em,
  itemsep=2pt plus 1pt minus 1pt,
  topsep=4pt plus 1pt minus 1pt
}

\newtcolorbox{keyFinding}[1][]{%
  enhanced, breakable,
  colback=keyFindingBackground,
  colframe=keyFindingColor,
  coltitle=black,
  boxrule=0pt, frame hidden,
  borderline west={3pt}{0pt}{keyFindingColor},
  sharp corners,
  left=10pt,right=10pt,top=8pt,bottom=10pt,
  before skip=8pt plus 2pt minus 2pt,
  after  skip=10pt plus 2pt minus 2pt,
  fonttitle=\bfseries,
  title=Key Finding, %
  attach boxed title to top left={yshift=-2mm, xshift=8pt},
  boxed title style={colback=white,colframe=keyFindingColor,boxrule=0.5pt,sharp corners,
                     left=8pt,right=8pt,top=2pt,bottom=2pt},
  before upper=\vspace{4pt},
  #1 %
}

\ifCCS
\setcopyright{acmlicensed}
\copyrightyear{2018}
\acmYear{2018}
\acmDOI{XXXXXXX.XXXXXXX}
\acmConference[Conference acronym 'XX]{Make sure to enter the correct conference title from your rights confirmation email}{June 03--05, 2018}{Woodstock, NY}
\acmISBN{978-1-4503-XXXX-X/2018/06}
\fi

\begin{document}

\title{Crossroads: A Smart Contract Layer for Chain-Abstracted Assets}

\ifSP
\IEEEoverridecommandlockouts
\makeatletter
\newcommand{\linebreakand}{%
  \end{@IEEEauthorhalign}
  \hfill\mbox{}\par
  \mbox{}\hfill\begin{@IEEEauthorhalign}
}
\makeatother
\fi
\author{
	\IEEEauthorblockN{James Austgen\textsuperscript{*}}
	\IEEEauthorblockA{Cornell Tech, IC3\\
		\href{mailto:james@cs.cornell.edu}{james@cs.cornell.edu}}
	\and
	\IEEEauthorblockN{Dani Vilardell\textsuperscript{*}}
	\IEEEauthorblockA{Cornell Tech, IC3\\
		\href{mailto:danivilardell@cs.cornell.edu}{danivilardell@cs.cornell.edu}}
    \and
	\IEEEauthorblockN{Ari Juels}
	\IEEEauthorblockA{Cornell Tech, IC3\\
		\href{mailto:juels@cornell.edu}{juels@cornell.edu}}
    \thanks{\textsuperscript{*}These authors contributed equally to this work.}
}

\ifACM
\renewcommand{\shortauthors}{Austgen et al.}
\ifACM\else
\renewenvironment{abstract}
  {\par\noindent\textit{\textbf{Abstract—}}\ignorespaces}
  {\par}
\fi

\begin{abstract}
This paper introduces \textbf{Crossroads}, a smart contract layer for \textit{chain-abstracted assets}. In Crossroads, assets from nearly \emph{any} chain are represented on a single backend blockchain as ERC-20 tokens. 
As a result, any asset can participate in smart-contract-based exchange, lending, or privacy applications on a single unified platform. So while Crossroads offers cross-chain bridging, a common, partial approach to alleviating the fragmentation of the blockchain ecosystem today, this is just one service within Crossroads' general-purpose chain-abstraction model.

Crossroads relies on \emph{key encumbrance}: a threshold signing committee holds encumbered keys controlling assets on each integrated chain, signing transactions only as authorized by smart contracts on the backend blockchain. Asset movements are fee-efficient, as ownership changes are recorded on the backend blockchain and users may set the transaction fee for withdrawals.

Crossroads enables permissionless, modular integration of new blockchains using pluggable oracles with flexible design options (zkBridge, TEE-based, hybrid). Asset deposits into Crossroads benefit from strong, chain-specific finalization guarantees, minimizing the risk of reorg attacks. Unlike existing bridges, however, third-party smart contracts in Crossroads can provide fast, optimistic access to funds before finalization completes. 

We prove that Crossroads satisfies soundness: given an honest quorum of signing committee members, any user can unilaterally generate a valid withdrawal transaction transferring their full net balance to an externally owned account on an integrated blockchain. We implement a proof of concept across multiple public blockchains: Bitcoin, Ethereum, and Solana. We catalog a range of applications enabled by Crossroads, including universal wallets, cross-chain staking and lending, privacy-preserving payments, and private management of public blockchain assets.
\end{abstract}

\begin{CCSXML}
<ccs2012>
<concept>
<concept_id>10010405.10003550.10003551</concept_id>
<concept_desc>Applied computing~Digital cash</concept_desc>
<concept_significance>500</concept_significance>
</concept>
<concept>
<concept_id>10002978.10003022</concept_id>
<concept_desc>Security and privacy~Software and application security</concept_desc>
<concept_significance>500</concept_significance>
</concept>
</ccs2012>
\end{CCSXML}

\ccsdesc[500]{Applied computing~Digital cash}
\ccsdesc[500]{Security and privacy~Software and application security}

\keywords{blockchain interoperability, chain abstraction, cross-chain exchange, key encumbrance, threshold signatures}

\maketitle
\else \maketitle  \pagestyle{plain} \fi

\sloppy

\section{Introduction}

Cross-chain bridges are the main path today towards blockchain interoperability. Current %
bridges, though, generally only support a handful of high-liquidity chains or chains within a single family (EVM, Cosmos, Solana, Bitcoin, etc.). Swaps between two cryptocurrencies often have to traverse several cross-chain exchanges, e.g., an exchange of StarkNet ETH to Solana SOL, might involve three or more bridge operations and several intermediary token swaps~\cite{rangoAPI}.

The challenge is that a single bridge connects exactly two chains. Full pairwise connectivity is impractical, as supporting $N$ chains would require $N^2$ bridges, each separately deployed and operated. Practical designs today therefore employ a sparse graph of bridges between high-liquidity chains, where transfers between unconnected chains must hop through intermediaries and pay fees at each step. 

A hub topology avoids such hopping: $N$ connections make every pair of chains reachable through the hub, and $N$ bridges are easier to audit than $N^2$. Existing hub-and-spoke systems, however, fall short of this potential in three ways: integrating a new chain is a permissioned decision of the hub's operators or governance; the hub's functionality is fixed at the protocol level, typically transfers or swaps, rather than open to third-party applications; and integration generally requires verification logic on the connected chain itself, leaving chains without smart contracts, such as Bitcoin, behind.  

\smallskip
\mypara{Crossroads.}
In this work, we present Crossroads, a smart contract layer for chain-abstracted assets. Crossroads adopts a hub-and-spoke topology but, unlike existing hubs, supports permissionless integration of any chain and enables a programmable layer that lets arbitrary smart contracts operate on the assets within the hub. Crossroads works with nearly any blockchain, even non-EVM chains and chains without smart contracts. It abstracts all blockchain assets as ERC-20 tokens existing on a single EVM-based chain (the \emph{backend chain}). On this backend chain, the exchange and transfer of cross-chain assets operate as ordinary on-chain transactions.

To interact with Crossroads-enabled cross-chain applications, users deposit assets into a committee controlled address on the asset's native chain; a per-chain oracle then attests on the backend chain that the deposit transaction is finalized, after which the corresponding ERC-20 tokens are minted on the backend chain. The committee holds the keys to these addresses under \emph{key encumbrance}~\cite{austgen2025liquefaction}: keys are generated and stored under a threshold signature scheme, so no party ever learns them, and signatures are produced only as authorized by the asset contracts on the backend chain. 

Concretely, to withdraw an asset from the Crossroads system, a user first burns its corresponding ERC-20 tokens on the backend chain. The committee verifies the burn against the asset contract and signs a transaction transferring the corresponding native assets to an address chosen by the user, with the user setting the fee and deciding when to broadcast.

Compared to existing cross-chain interoperability platforms, this design gives Crossroads three novel properties. First, cross-chain functionality is programmable: any smart contract deployed on the backend chain (an AMM, a lending market, a stablecoin issuer) operates on assets from every integrated chain. Because each ERC-20 token carries the right to withdraw its underlying asset, transferring tokens also transfers that right with them; a single backend transaction, such as an AMM swap, thus settles an exchange of native assets atomically. Second, integration is permissionless: integrating a chain amounts to deploying its ERC-20 asset contract on the backend chain and an oracle relaying the chain's finalized transactions, so anyone can add a chain or asset without the approval of a developer or system operator. Third, the requirements on integrated chains are minimal: because Crossroads needs only a unidirectional oracle from the asset chain to the backend chain, any chain with eventual, verifiable finality can be integrated, with no contracts, deployed logic, or protocol changes on the chain itself. This is what brings chains without smart contract support, such as Bitcoin and XRPL, within reach.

Crossroads' security rests on the standard trust assumption of threshold cryptosystems: an adversary corrupts fewer signing committee members than the signing threshold. Under this assumption, we prove that Crossroads is \textit{sound}: any user can redeem their Crossroads-issued ERC-20 tokens for the underlying asset and transfer them to an externally owned account (EOA) on the asset's native chain. Soundness implies that an adversary in our threat model cannot steal or lose user funds. In view of committee compromises in practice~\cite{krause20251,chainalysis2026kelpdao,zhang2024cross}, however, we present modular security enhancements that provide defense in depth should our corruption assumption fail. These include use of secure hardware (TEEs), security councils with freeze authority, and accountable threshold signatures that identify misbehaving signers.

Crossroads' design ideas have already seen industry adoption. Oasis Sapphire, a TEE-based blockchain, has integrated many of Crossroads' components into its multi-chain programmable wallet SDK, Privana~\cite{oasis2026privana}.

\smallskip

\mypara{Applications.} Because Crossroads represents assets from arbitrary chains as ERC-20 tokens on a single programmable backend, any ERC-20 compatible smart contract on the backend  automatically becomes a cross-chain application. This can turn systems which otherwise would require bridging infrastructure into ordinary single-chain smart contracts.

One direct application of Crossroads is a \emph{chain-agnostic wallet}, a single key pair that controls assets across every integrated chain, with the user’s full portfolio visible on a single ledger and tokens spendable on any integrated chain without the need to hold native accounts there. Crossroads also allows standard exchange contracts, such as AMM, order book, or RFQ contracts, to be deployed onto the Crossroads platform. Lending and borrowing applications are similarly possible: collateral deposited from one chain can back loans denominated in assets from another, making cross-chain shorts and arbitrage practical with existing lending platforms such as Morpho~\cite{morpho} or Euler~\cite{euler}.

Crossroads also enables applications that existing bridges cannot offer at all. A \emph{universal stablecoin} can be issued natively on the backend and spent on any integrated chain, removing the centralized issuer’s role in deciding which chains to support. \emph{Cross-chain staking} via Crossroads offers a new, highly flexible, multi-chain means for emerging chains to bootstrap their security with established assets such as ETH or BTC, rather than just their native liquidity~\cite{tas2022babylon,tas2023bitcoin}. When the backend blockchain is itself privacy-preserving (e.g. Oasis Sapphire~\cite{cheng2019ekiden, oasis2023sapphire}), Crossroads further enables \emph{private payments} and \emph{private institutional management of public-chain assets}, with smart-contract-enforced disclosure and compliance policies layered directly onto exchange transactions.

\subsection*{Contributions}

\noindent We provide background to Crossroads in~\Cref{sec:background}. Subsequently, our contributions are as follows:

\begin{itemize}
    \item \textit{Crossroads:} We introduce Crossroads, a smart-contract layer for chain-abstracted assets. Crossroads can represent assets from \textit{any} chain as ERC-20s on its unified smart-contract platform (\Cref{sec:crossroads_arch}).
    \item \textit{Optimizations:} We present key optimizations to Crossroads' core design: per-user deposit addresses to enable broad wallet compatibility; optimistic deposits and withdrawals to reduce settlement latency; and parallelized withdrawals to increase throughput (\Cref{sec:practical}).
    \item \textit{Security:} We prove system soundness by showing that at any point a user can unilaterally withdraw their full net balance to an externally owned account. We also propose modular security enhancements to strengthen the security model (\Cref{sec:security}).
    \item \textit{Implementation:} We implement a proof-of-concept of Crossroads and integrate it with Bitcoin, Ethereum, and Solana (\Cref{sec:implementation}), available at {\url{https://github.com/trate3/crossroads}}.
    \item \textit{Applications:} We identify a broad range of applications for Crossroads, from chain-agnostic wallets and cross-chain decentralized exchanges to cross-chain staking and lending, privacy payments, and universal stablecoins (\Cref{sec:applications}).
\end{itemize}

We review related work in \Cref{sec:related-work} and conclude in \Cref{sec:conclusion}.

\section{Building Blocks}\label{sec:background}

We introduce the cryptographic and systems primitives that Crossroads builds on.

\subsection{Cross-chain oracles}

Blockchains cannot natively observe events outside their own ledger; \textit{oracles} are systems that make external events available to smart contracts. A \textit{cross-chain oracle}  lets a contract on a destination chain verify that an event---for our purposes, the inclusion and finality of a transaction---occurred on a source chain. 

Oracle constructions differ primarily in their trust model: \textit{committee-based} oracles rely on an honest majority of external attesters~\cite{wormhole, axelar}, \textit{light-client} oracles verify the source chain's consensus on the destination chain, possibly via zero knowledge proofs~\cite{xie2022zkbridge}, and \textit{TEE-based} oracles run prove transaction finality via their attestation capabilities. Crossroads requires  a unidirectional oracle of finalized asset-chain transactions into its backend chain, and any such system can serve that purpose, each with its own security guarantees.

\subsection{Cryptocurrency and DeFi}
Most major blockchains have a single \textit{native asset} which is required to pay for transaction fees and protect against unbridled resource consumption or denial of service. Many assets, however, do not require their own blockchain. These can be represented as \textit{tokens}, and on EVM chains, they are smart contracts which adhere to the ERC-20 standard~\cite{erc20standard}.

The popularity of ERC-20 tokens gave rise to \textit{decentralized finance} (DeFi), a term encompassing financial services implemented on smart contracts, such as token-exchange, lending, and derivative-asset platforms.

\subsection{Key encumbrance}
Nearly all blockchains use public-key cryptography for account authentication and transaction signing. Usually, individual users directly control their private key material. Under \textit{key encumbrance}, a policy, rather than an individual user, governs a key's usage. The policy acts as a proxy through which authenticated parties may obtain signatures from the key, and it decides who is authorized to obtain signatures for particular messages. The parties themselves never learn the key.

Key encumbrance can be used to construct a shared blockchain account in which parties who do not trust each other individually control their own portions of the account balance~\cite{austgen2025liquefaction}. Furthermore, a single key encumbrance platform can sign transactions for separate blockchains. Practically, a TEE application or MPC protocol is used to generate keys securely and enforce encumbrance policies over them.

\subsection{Threshold signature scheme}\label{sec:tss}
We require a $(t,N)$-threshold signature scheme comprising four algorithms $(\mathsf{KG}, \mathsf{Sign}, \mathsf{Verify}, \mathsf{Combine})$:
\begin{itemize}
    \item $\mathsf{KG}(1^\lambda, N, t) \to (\mathsf{pk}, \mathsf{pk}_c, \mathsf{sk}_1, \ldots, \mathsf{sk}_N)$: a probabilistic key generation algorithm that generates a $t$-out-of-$N$ shared key, outputting a public key $\mathsf{pk}$, a combiner public key $\mathsf{pk}_c$, and $N$ signing key shares $\mathsf{sk}_1, \ldots, \mathsf{sk}_N$.
    \item $\mathsf{Sign}(\mathsf{sk}_i, m) \to \sigma'_i$: a (possibly) probabilistic signing algorithm that takes a key share $\mathsf{sk}_i$ and a message $m$, and outputs a signature share $\sigma'_i$.
    \item $\mathsf{Verify}(\mathsf{pk}, m, \sigma) \to b$: a deterministic verification algorithm that takes a public key $\mathsf{pk}$, a message $m$, and a signature $\sigma$, and outputs $b \in \{\mathsf{accept}, \mathsf{reject}\}$.
    \item $\mathsf{Combine}(\mathsf{pk}_c, m, J, \{\sigma'_j\}_{j \in J}) \to \sigma$: a deterministic combiner algorithm that takes the combiner public key $\mathsf{pk}_c$, a message $m$, a subset $J \subseteq [N]$ with $|J| = t$, and signature shares $\{\sigma'_j\}_{j \in J}$, and either outputs a combined signature $\sigma$ or outputs $\mathsf{blame}(J^*)$ for some nonempty $J^* \subseteq J$, indicating that the shares $\{\sigma'_j\}_{j \in J^*}$ are invalid.
\end{itemize}
We require the scheme to satisfy correctness, robustness, and existential unforgeability under chosen message attack (EUF-CMA) against an adversary corrupting up to $t-1$ signers, as defined in~\cite{boneh2023applied}.

\paragraph{Distributed key generation.} Rather than relying on a trusted dealer to execute $\mathsf{KG}$ and distribute shares, Crossroads instantiates key generation through a secure \emph{distributed key generation} (DKG) protocol $\Pi_{\mathsf{DKG}}$ jointly run by the $N$ signing committee members, as defined in~\cite{boneh2023applied}.

\subsection{Trusted execution environments}
A trusted execution environment (TEE) is an isolated execution environment enabled by hardware design that protects applications running inside it from inspection or tampering by the host operating system, and in some designs from the machine owner as well. Programs executing within a TEE can produce outputs accompanied by a \emph{remote attestation}: a cryptographic statement, signed by a hardware-rooted key, that binds the output to the specific program that produced it. Any external party can then verify, against the manufacturer's attestation key, that the output was produced by an honest execution of the attested program without trusting the host machine.

Several commercial TEE implementations are widely deployed, including Intel SGX~\cite{intel_sgx}, AMD SEV-SNP~\cite{amd_sev_snp}, and AWS Nitro Enclaves~\cite{aws_nitro}. They differ in threat model, performance, and exposure to side-channel attacks. Recent work has demonstrated hardware-level vulnerabilities in deployed TEEs, including confidentiality leakage~\cite{wilke2024tdxdown, gast2025counterseveillance, yuan2025ciphersteal} and integrity attacks that enable forged attestations~\cite{RMPocalypse2025, chuang2026tee}. Crossroads' core security guarantees, however, do not rest on TEEs and we employ them only as an optional defense-in-depth layer.

\section{Crossroads Architecture}\label{sec:crossroads_arch}

In this section, we present the Crossroads architecture: its design goals and the components that fulfill them (\Cref{subsec:architecture-protocol}), requirements for asset support (\Cref{sec:requirements}), and the deposit and withdrawal flows (\Cref{subsec:architecture-deposits,subsec:architecture-withdrawals}).

\subsection{Protocol} 
\label{subsec:architecture-protocol}

\begin{figure*}
    \centering
    \includegraphics[width=\textwidth]{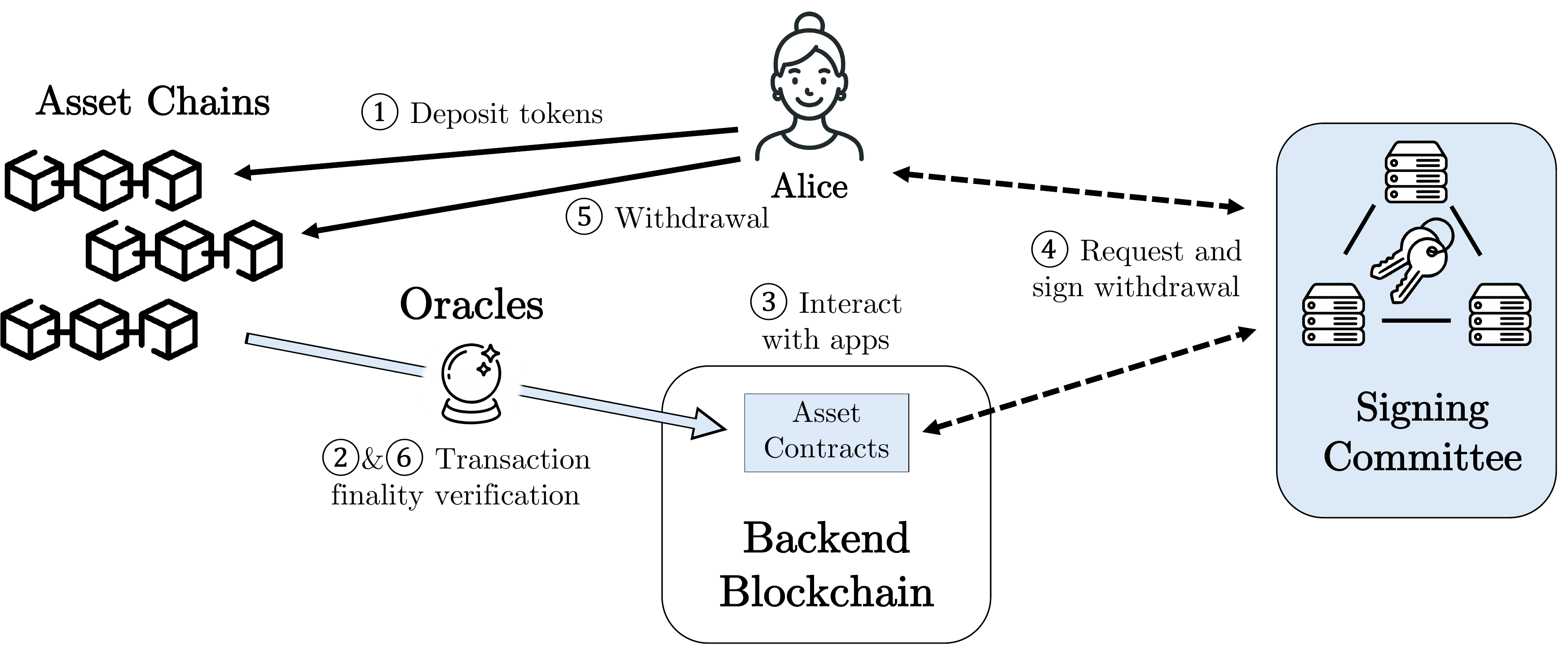}
    \caption{Overview of the Crossroads architecture. Alice deposits assets into encumbered addresses on asset chains (1); oracles verify the deposits and credit her balance on the backend blockchain (2), where she can interact with applications such as AMMs or staking contracts (3). To withdraw, Alice burns tokens on the backend and obtains a signed transfer transaction from the signing committee (4), broadcasts it to the asset chain (5), and the oracle verifies the withdrawal to settle her balance (6). Solid arrows denote on-chain transactions; dotted arrows, off-chain communication. The components in blue are core to Crossroads: the asset contracts, oracles, and the signing committee.} 
    \label{fig:arch-components}
\end{figure*}

Crossroads must solve several problems to achieve our goal of general cross-chain support. It must (1) hold assets on heterogeneous \textit{asset chains}, some without smart contracts; (2) support a representation of these assets on a programmable EVM \textit{backend blockchain}; and (3) maintain consistency between balances on the asset chains and the backend blockchain. Additionally, it must not require special permission to integrate new chains. New integrations must therefore add little additional overhead to the platform, and Crossroads itself must not be opinionated about the validity of a particular chain.

The first issue is how to control assets on separate chains. Nearly every blockchain handles spending the same way: a transaction is valid only if it carries a signature verifying under the account's key. So to custody a user's funds, Crossroads holds it in an account on the asset chain, and spending from that account requires its signing key. Rather than entrust that key to any single party, Crossroads splits it into shares held by a \textit{signing committee} that threshold-signs transactions or other messages.

We place all chain-specific logic in the backend blockchain, where each asset is represented as an ERC-20 \textit{asset contract}, ultimately constructing a hub-and-spoke cross-chain topology. The signing committee responds to signing requests sent over the network but only signs transactions if an asset contract has authorized the requester to obtain a signature for the requested message. This simple policy keeps the signing committee's logic small and separately auditable. 
This signing style is known as \emph{key encumbrance}---when meeting the requirements of the asset contract, a requester may request transaction signatures from the committee and broadcast the signed transactions directly, as if the owner of the (encumbered) account.

Crossroads keeps the backend balances blockchain consistent with the asset chain with the help of per-chain transaction \textit{finalization oracles}. Notably, these oracles only must report the state of the asset chain to the backend chain, so smart contracts are not required on asset chains. Crossroads instead relies on asset chains' own double-spending protection and transaction validity rules.

We now expand on the five main components within Crossroads. The three drawn in blue in~\Cref{fig:arch-components} are specific to Crossroads, while the asset chains and backend blockchain are existing systems.

\mypara{Asset chains.} Crossroads can integrate any blockchain that provides eventual, verifiable finality, including blockchains without smart contracts such as Bitcoin and XRPL, using a per-chain asset contract. The signing committee derives the signature for an encumbered address $\textsf{chain}_\textsf{addr}$ linked to that asset contract. Users deposit the assets into Crossroads by transferring them into $\textsf{chain}_\textsf{addr}$. While assets remain custodied at $\textsf{chain}_\textsf{addr}$ on the asset chain, all exchange and transfer activity takes place on the backend blockchain.

\mypara{Backend blockchain.} The backend blockchain is the programmable EVM chain where the Crossroads asset contracts are deployed alongside any other third-party smart contracts. Smart contracts compatible with ERC-20 tokens can therefore operate on assets from every integrated asset chain. For instance, an AMM contract such as Uniswap~\cite{uniswapv3} would allow Crossroads tokens from one asset chain to be exchanged atomically for those from another. The backend blockchain can similarly host lending markets, staking platforms, and privacy-preserving payments, all built using standard tooling without any awareness of the underlying asset chains.

\mypara{Asset contracts.}
Every blockchain asset integrated into Crossroads is represented by a dedicated asset contract on the backend blockchain. These contracts follow the ERC-20 standard and track each user's holdings of wrapped tokens. Each asset contract mediates all operations that affect user balances within Crossroads, including deposits, transfers, and withdrawals. It also exposes a view function called \texttt{canSign} to the signing committee, allowing transaction signing requests to be checked against recorded balances. The full asset contract functionality is presented in Figure~\ref{fig:crossroads-program}. 

\mypara{Oracles.} A transaction finalization oracle informs the backend blockchain when a transaction is finalized on the asset chain, letting the asset contracts verify a deposit or withdrawal. Crossroads supports flexible oracle designs, including zkBridges, TEE-based oracles, and hybrid approaches. The choice of oracle determines the security guarantees for each asset contract (\Cref{sec:security}); a single chain may even be served by multiple asset contracts each backed by different oracles.

\mypara{Signing committee.} The signing committee jointly holds the keys controlling Crossroads assets on each asset chain under key encumbrance. Shares of each key are distributed across committee members via a threshold signature scheme, so no party ever learns a key. Users communicate with the signing committee off-chain, and spending from a Crossroads-controlled address requires a threshold of members to cooperate. The encumbrance policy is read from the asset contract: members produce signature shares only for requests the asset contract authorizes, so withdrawal transactions are only signed when the corresponding tokens are locked. We specify the committee protocol in Figure~\ref{fig:signing-committee}.

\vspace{5mm}

The high-level architecture of Crossroads and the associated data flow for users are depicted in~\Cref{fig:arch-components}.

Each asset represented in Crossroads contains its own blockchain-specific deposit and withdrawal mechanisms. As Crossroads aims to be widely permissive in the blockchains it supports, we make few assumptions about blockchain architecture. In the following sections, we detail these assumptions and explain how deposits into and withdrawals from the Crossroads network operate.

\begin{figure}[!t]
\protbox{Crossroads Assets Contract $\textsf{crsrds}(\mathcal{O}, \textsf{TB})$}
{
    \textit{Deposits:} $D \gets \emptyset$\\
    \textit{Balances:} $\textsf{bal} \gets \{\, (\textsf{acct},\textsf{chain}) \mapsto 0 \mid \textsf{acct} \in \mathcal{U},\ \textsf{chain} \in B \cup \{\eta\} \,\}$\\
    \textit{Reward:} $r = \bot$\\
    \textit{Pending Withdrawals:} $\textsf{PW} \gets \emptyset$\\
    \textit{Escrowed Rewards:} $\textsf{RW} \gets \emptyset$\\
    \textit{Epochs:} \textsf{epoch} $\gets \{\, \textsf{chain} \mapsto 0 \mid \ \textsf{chain} \in B \,\}$\\
    \textit{Crossroads addrs:} $\textsf{chain}_\textsf{addr} \gets \{\, \textsf{chain} \mapsto a_\textsf{chain} \mid \textsf{chain} \in B \,\}$\\

    $\circ\:$ On input $(\textsf{``Init''}, \textsf{reward})$ \\
    \hspace*{2em} set $r = \textsf{reward}$\\

    $\circ\:$ On input $(\textsf{``Deposit''}, \textsf{chain}, \textsf{tx})$ \\
    \hspace*{2em} \textbf{assert} $D[\textsf{tx}] = \bot$\\
    \hspace*{2em} $(\textsf{amt}, \textsf{sdr}, \textsf{rsv}, \textsf{fee}) \gets \mathcal{O}(\textsf{chain}, \textsf{tx})$\\
    \hspace*{2em} $\textsf{acct} \gets \textsf{TB}.\textsf{Extract}(\textsf{chain}, \textsf{tx})$\\
    \hspace*{2em} \textbf{assert} $\textsf{amt,acct} \ne \bot \textbf{ and } \textsf{rsv} = \textsf{chain}_\textsf{addr}[\textsf{chain}]$\\
    \hspace*{2em} set $D[\textsf{tx}] = \textsf{Claimed}$\\
    \hspace*{2em} set $\textsf{bal}[(\textsf{acct},\textsf{chain})] = \textsf{bal}[(\textsf{acct},\textsf{chain})] + \textsf{amt}$\\

    $\circ\:$ On input $(\textsf{``Transfer''}, \textsf{chain}, \textsf{amt}, \textsf{rcv})$ from $\textsf{acct}$\\
    \hspace*{2em} \textbf{assert} $\textsf{bal}[(\textsf{acct}, \textsf{chain})] \ge \textsf{amt}$ \textbf{ and } $\textsf{rcv} \in \mathcal{U}$\\
    \hspace*{2em} set $\textsf{bal}[(\textsf{acct}, \textsf{chain})] = \textsf{bal}[(\textsf{acct}, \textsf{chain})] - \textsf{amt}$\\
    \hspace*{2em} set $\textsf{bal}[(\textsf{rcv}, \textsf{chain})] = \textsf{bal}[(\textsf{rcv}, \textsf{chain})] + \textsf{amt}$\\

    $\circ\:$ On input $(\textsf{``LockWithdrawal''}, \textsf{chain}, \textsf{amt})$ from $\textsf{acct}$ \\
    \hspace*{2em} \textbf{assert} $\textsf{bal}[(\textsf{acct},\textsf{chain})] \ge \textsf{amt}$ \textbf{ and } $\textsf{bal}[(\textsf{acct},\eta)] \ge r$\\
    \hspace*{2em} set $\textsf{bal}[(\textsf{acct},\textsf{chain})] = \textsf{bal}[(\textsf{acct},\textsf{chain})] - \textsf{amt}$\\
    \hspace*{2em} set $\textsf{PW}[\textsf{acct}, \textsf{chain}] = \textsf{PW}[\textsf{acct}, \textsf{chain}] + \textsf{amt}$\\
    \hspace*{2em} set $\textsf{bal}[(\textsf{acct},\eta)] = \textsf{bal}[(\textsf{acct},\eta)] - r$\\
    \hspace*{2em} set $\textsf{RW}[\textsf{acct}] = \textsf{RW}[\textsf{acct}] + r$\\
    \hspace*{2em} \textsf{return} $\textsf{id}$ to $\textsf{acct}$\\

    $\circ\;$ On input $(\textsf{``CanSign''}, \textsf{acct}, \textsf{chain}, \textsf{amt}, \textsf{fee})$ from $\mathcal{F}_{\textsf{SC}}$\\
    \hspace*{2em} \textbf{assert} $\textsf{PW}[\textsf{acct}, \textsf{chain}] \geq \textsf{amt} + \textsf{fee}$ \textbf{ and } $\textsf{RW}[\textsf{acct}] \geq r$\\
    \hspace*{2em} \textsf{return} $1$ to $\mathcal{F}_{\textsf{SC}}$\\

    $\circ\;$ On input $(\textsf{``GetEpoch''}, \textsf{chain})$ from $\mathcal{F}_{\textsf{SC}}$\\
    \hspace*{2em} \textsf{return} $\textsf{epoch}[\textsf{chain}]$\\

    $\circ\:$ On input $(\textsf{``ConfirmWithdrawal''}, \textsf{chain}, \textsf{tx})$ from $\textsf{acct}'$ \\
    \hspace*{2em} $(\textsf{amt}, \textsf{sdr}, \textsf{rsv}, \textsf{fee}) \gets \mathcal{O}(\textsf{chain}, \textsf{tx})$\\
    \hspace*{2em} $\textsf{acct} \gets \textsf{TB}.\textsf{Extract}(\textsf{chain}, \textsf{tx})$\\
    \hspace*{2em} \textbf{assert} $\textsf{amt, acct} \ne \bot \textbf{ and } \textsf{sdr} = \textsf{chain}_\textsf{addr}[\textsf{chain}]$\\
    \hspace*{2em} set $\textsf{PW}[(\textsf{acct},\textsf{chain})] = \textsf{PW}[(\textsf{acct},\textsf{chain})] - (\textsf{amt} + \textsf{fee})$\\
    \hspace*{2em} set $\textsf{RW}[\textsf{acct}] = \textsf{RW}[\textsf{acct}] - r$\\
    \hspace*{2em} set $\textsf{bal}[(\textsf{acct}',\eta)] = \textsf{bal}[(\textsf{acct}',\eta)] + r$\\
    \hspace*{2em} set $\textsf{epoch}[\textsf{chain}] = \textsf{epoch}[\textsf{chain}] + 1$\\
}
\caption{The Crossroads assets contract, parameterized by an oracle $\mathcal{O}$ and a transaction binding $\textsf{TB}$ exposing $\textsf{TB}.\textsf{Embed}$ and $\textsf{TB}.\textsf{Extract}$ (see Definition~\ref{def:tx-binding}). $\eta$ denotes the native asset of the backend blockchain. $\mathcal{O}(b, \textsf{tx})$ returns the transferred amount, sender, recipient, and on-chain fee, or $\bot$ if the transaction is not finalized on blockchain $b$.}
\label{fig:crossroads-program}
\end{figure}

\begin{figure}[!t]
\protbox{Signing Committee Protocol $\Pi_{\textsf{SC}}^{\textsf{TSS,TB}}$}
{
    \textit{Parameters:} $n$ signers $\mathcal{S} = \{s_1, \ldots, s_n\}$, 
    threshold $t \leq n$, transaction binding $\textsf{TB}$\\
    \textit{State:} $\mathcal{K} \gets \emptyset$ (per-signer local state)\\
    
    $\circ\;$ On input $(\textsf{``KeyGen''}, \textsf{chain})$\\
    \hspace*{2em} \textbf{assert} $\mathcal{K}[\textsf{chain}] = \bot$ at all $s_i$\\
    \hspace*{2em} $\mathcal{S}$ jointly executes $\Pi_{\textsf{DKG}}(1^\lambda, n, t)$; each $s_i$ obtains $\textsf{out}^{(i)}_{\textsf{chain}}$\\
    \hspace*{2em} parse $\textsf{out}^{(i)}_{\textsf{chain}}$ as $(\textsf{pk}_{\textsf{chain}}, \textsf{pk}_{c,\textsf{chain}}, \textsf{sk}^{(i)}_{\textsf{chain}})$\\
    \hspace*{2em} At each $s_i$: set $\mathcal{K}[\textsf{chain}] = (\textsf{pk}_{\textsf{chain}}, \textsf{pk}_{c,\textsf{chain}}, \textsf{sk}^{(i)}_{\textsf{chain}})$\\
    \hspace*{2em} \textsf{broadcast} $(\textsf{chain}, \textsf{pk}_{\textsf{chain}})$
    \medskip
    
    $\circ\;$ On input $(\textsf{``Sign''}, \textsf{chain}, \textsf{amt}, \textsf{fee}, \textsf{addr}_{\textsf{dest}})$ from $\textsf{acct}$, at each $s_i \in \mathcal{S}$\\
    \hspace*{2em} \textbf{assert} $\mathcal{K}[\textsf{chain}] \neq \bot$\\
    \hspace*{2em} \textbf{assert} $\textsf{crsrds}.\textsf{CanSign}(\textsf{acct}, \textsf{chain}, \textsf{amt}, \textsf{fee}) = 1$\\
    \hspace*{2em} $\textsf{epoch} \gets \textsf{crsrds}.\textsf{GetEpoch}(\textsf{chain})$\\
    \hspace*{2em} $\textsf{tx}_0 \gets \textsf{BuildTx}(\textsf{chain}, \textsf{addr}_{\textsf{dest}}, \textsf{amt}, \textsf{fee}, \textsf{epoch})$\\
    \hspace*{2em} $\textsf{tx} \gets \textsf{TB}.\textsf{Embed}(\textsf{chain}, \textsf{tx}_0, \textsf{acct})$\\
    \hspace*{2em} $\sigma^{(i)} \gets \textsf{TSS}.\mathsf{Sign}(\textsf{sk}^{(i)}_{\textsf{chain}}, \textsf{tx})$\\
    \hspace*{2em} \textsf{return} $(\textsf{tx}, i, \sigma^{(i)})$ to $\textsf{acct}$
    \medskip
    
    $\circ\;$ At $\textsf{acct}$, upon collecting any $t$ shares $\{\sigma^{(j)}\}_{j \in J}$ for $\textsf{tx}$:\\
    \hspace*{2em} $\sigma \gets \textsf{TSS}.\mathsf{Combine}(\textsf{pk}_{c,\textsf{chain}}, \textsf{tx}, J, \{\sigma^{(j)}\}_{j \in J})$\\
    \hspace*{2em} \textbf{if} $\sigma = \mathsf{blame}(J^*)$: exclude $J^*$, collect more shares, retry\\
    \hspace*{2em} \textbf{assert} $\textsf{TSS}.\mathsf{Verify}(\textsf{pk}_{\textsf{chain}}, \textsf{tx}, \sigma) = 1$\\
    \hspace*{2em} output $(\textsf{tx}, \sigma)$
}
\caption{The signing committee protocol $\Pi_{\textsf{SC}}^{\textsf{TSS}}$, parameterized by a $(t, n)$-threshold signature scheme $\textsf{TSS}$ (see~\Cref{sec:tss}) and a transaction binding $\textsf{TB}$ (see~\Cref{def:tx-binding}). Key generation is performed per chain by the committee jointly executing a distributed key generation protocol $\Pi_{\textsf{DKG}}$. On each signing request, committee members independently verify authorization via $\textsf{crsrds}.\textsf{CanSign}$, embed the requesting $\textsf{acct}$ into the withdrawal transaction via $\textsf{TB}.\textsf{Embed}$, and produce signature shares that $\textsf{acct}$ combines into a final signature. The $\mathsf{blame}$ mechanism identifies invalid shares, preserving liveness as long as $t$ honest signers remain.}
\label{fig:signing-committee}
\end{figure}

\subsection{Asset chain requirements}\label{sec:requirements}
Crossroads is compatible with nearly any blockchain, but there are several requirements for a particular integration to take full advantage of our design. All the chains that we have explored meet these requirements, though historically some chains (e.g., Bitcoin) previously have not.

\paragraph{Fee replacements.} Transactions whose fee payments are too low to be included on the target chain must be replaceable by transactions carrying higher fees. This is to prevent a denial-of-service attack in which low-fee transactions are broadcast but never get included in a block.

\paragraph{Single-choice transaction inclusion.} The chain must allow a set of valid, signed transactions containing every potential valid spend from the same account to remain valid until a transaction in the set is selected by the chain's consensus algorithm, after which all the others become invalid forever. On Ethereum, the transaction nonce fills this purpose; on Bitcoin, it can be a UTXO spent in the transaction.

\paragraph{Measurable finalization.} There must be a method for measuring confidence in the finality of a particular transaction using an oracle contract on the backend chain.

\paragraph{Transaction signing support.} The signing committee must support the signature scheme of the chain. Since the committee holds each key under a threshold scheme and never reconstructs it, the scheme must admit a threshold signing protocol whose outputs are accepted by the chain's native verifier.

\paragraph{Transaction binding.} The chain must support a mechanism for attributing transactions in the asset chains to backend blockchain addresses. For deposits, this attribution determines which Crossroads account is credited; for withdrawals, it identifies the spender so that accounting can be applied correctly. We capture this attribution mechanism as an interface implemented per asset chain.
\begin{definition}[Transaction binding]
\label{def:tx-binding}
Let $\mathcal{T}_B$ be the well-formed transactions of blockchain $B$ and $\mathcal{U}$ the backend addresses.
A \emph{transaction binding} for $B$ is a pair of algorithms $\textsf{Embed}: \mathcal{T}_B \times \mathcal{U} \to \mathcal{T}_B$ and $\textsf{Extract}: \mathcal{T}_B \to \mathcal{U} \cup \{\bot\}$ satisfying $\textsf{Extract}(\textsf{Embed}(t,a)) = a$ for all $t \in \mathcal{T}_B$, $a \in \mathcal{U}$.
\end{definition}

Concrete bindings are specific to their respective chains. For example, on Bitcoin, $\textsf{Embed}$ might add an \texttt{OP\_RETURN} output containing $a$. On Ethereum, $\textsf{Embed}$ can append $a$ to the transaction calldata or the access list. We describe these instantiations in Section~\ref{sec:implementation}.

\subsection{Deposits}
\label{subsec:architecture-deposits}
For users to deposit assets on an asset chain into Crossroads, they must transfer them to a Crossroads-controlled encumbered account on the asset chain. Each deposit transaction into Crossroads must include a transaction binding to the account to which the deposit is credited.

After a deposit transaction has been finalized on its asset chain, its sender provides a proof of transaction inclusion to Crossroads. Crossroads validates the proof using the chain-specific oracle, ensures the transaction beneficiary is an account Crossroads controls, and ensures the deposit had not previously been processed. If the checks pass, Crossroads mints a wrapped representation of the deposited tokens to the Crossroads account bound to the deposit and marks the transaction as processed. %

\subsection{Withdrawals}
\label{subsec:architecture-withdrawals}
To withdraw the underlying asset from Crossroads, a user must first burn an amount of tokens needed to cover the withdrawal on the backend blockchain; this establishes the maximum asset-chain transaction cost, which includes both the value being transferred and the transaction fee. The user may then obtain signed transactions from the signing committee spending up to this maximum. Since all withdrawals originate from the same encumbered address, included transactions must be attributable back to the spender. This is achieved by requiring the chain-specific transaction binding in withdrawal transactions, which embeds the spender's backend account into the signed transaction.

It is possible to use, rather than transaction bindings, simply a commitment (on the backend blockchain) from the spender to pay for a particular transaction hash. However, we propose transaction bindings because they do not require the spender to make additional transactions (fresh commitments) if a different party's withdrawal is included instead. In this way, transaction bindings are more resistant to frontrunning and denial of service attacks, since the time when a user requests a signature gets revealed only to the signing committee rather than to the public backend chain.

The spender may choose the transaction fee and hold onto signed transactions indefinitely. Crossroads does not guarantee that a particular transaction it has signed will be included on an asset chain, but it can guarantee that the transactions it signs are valid transactions. Multiple spenders may, in fact, compete to broadcast their transactions from the same asset chain account. After a transaction does land on an asset blockchain and finalizes, a transaction inclusion proof must be provided to Crossroads for accounting purposes. Anyone may submit the proof to receive a small reward. The spender must reserve some tokens in advance to cover this fee.

\section{Optimizations}\label{sec:practical}

The Crossroads construction in~\Cref{sec:crossroads_arch} is functionally complete, but several practical concerns may limit its adoption. On the deposit side, embedding transaction bindings in transaction fields restricts which wallets and applications can produce valid deposits, and the wait for source-chain finality delays deposit times. On the withdrawal side, all withdrawals in the base Crossroads construction from a given chain originate from a single encumbered address, capping throughput at one withdrawal per epoch.

In this section we describe Crossroads optimizations that mitigate these concerns by means of two ideas: use of multiple encumbered addresses on a given asset chain and facilitating fast transactions using third parties. The optimizations we propose can be deployed selectively and do not require modification of the core Crossroads protocol.

\subsection{Deposits}
\label{subsec:optimized-deposits}
We anticipate that Crossroads depositors will not always control their spending key to the extent that they can modify the fields needed to embed a binding in their deposit transaction. Some Crossroads deposits will originate from a centralized exchange or from a wallet which only permits vanilla transfers without embedded transaction data.

We address this with \textit{custom deposit addresses}, each bound to a single Crossroads account to which deposits are credited (the \textit{deposit beneficiary}). These require two asset-chain transactions to complete a deposit: one sending assets to the custom deposit address, and another consolidating those funds into the main encumbered address.

To do so, we employ a novel auction mechanism selling a sub-balance of the custom deposit address representing the new deposit in exchange for equivalent Crossroads tokens. The auction's reserve price is equal to the deposited amount minus the cost the deposit beneficiary would have paid in transaction fees to send the consolidation transaction herself.

This auction is enabled by key encumbrance. The auction winner gains encumbered control over the custom deposit addresses it bid on, allowing it to consolidate them at any later time. Bidders are incentivized to participate: the winner can delay broadcasting the consolidation transaction until transaction fees fall, profiting from the difference. In effect, the auction winner assumes the role of performing the same type of asset consolidation optimization that centralized exchanges perform~\cite{coinbasebatching}.

\mypara{Speeding up deposits.}
Before issuing tokens, Crossroads needs to wait until a deposit transaction is finalized on the asset chain in order to protect against multiple claims to the same deposit (i.e., a double spend). On some chains this finalization is slow, taking up to 30 minutes or longer~\cite{circle-cctp-finality}.

We can improve deposit times by transferring the risk of double spending from Crossroads to a third party willing to take it for a fee. A small extension to the asset contract lets a depositor reassign an unfinalized deposit, enabling a separate fast-deposit contract to atomically swap a pending deposit claim for already-issued Crossroads tokens owned by the third party. Suppose, for example, that Alice deposits 10 ETH to Crossroads but does not want to wait for finality (approximately 15 minutes). Bob already has Crossroads ETH tokens and is willing to accept Alice's deposit in exchange for his tokens, for a small fee, after it gets one block confirmation (under 12 seconds). Alice atomically swaps her deposit claim for 9.999 Crossroads ETH tokens from Bob. Once the deposit finalizes, Bob submits the deposit proof and is issued 10 Crossroads ETH tokens for a profit of 0.001 ETH. This mechanism mirrors the fast fills of intent-based bridges such as Across~\cite{across}, where relayers provide quick access to bridged funds for a fee; in Crossroads, however, the third party is repaid through the standard {deposit flow}.

\subsection{Withdrawals}\label{sec:faster-withdrawals}

On most account-based blockchains, deposits to Crossroads can happen in parallel, since each deposit transaction is independent of all other deposits and all withdrawals. In our~\Cref{sec:crossroads_arch} construction, however, withdrawals from the same chain all originate from the same encumbered address, so the system can only execute one withdrawal at a time per chain, after which it must wait for finalization and inclusion proof before signing the next.

We address this bottleneck with an intent-based withdrawal contract on the backend chain. To withdraw, a user locks the desired amount plus a small reward in Crossroads tokens in the contract. Any party, called a \textit{filler}, may then send the equivalent native asset to the user's destination address on the target chain from their own address. The filler then submits an inclusion proof to claim the locked tokens. If no filler claims the intent within a timeout, the user reclaims the locked tokens and falls back to the standard withdrawal path. As the asset transfer happens from the filler's own address, it does not conflict with the encumbered account's transactions, and many withdrawals can be served in parallel.

Without coordination, however, two fillers may race to fulfill the same request, with only one being reimbursed. To prevent this, fillers commit to a request by staking collateral on the backend chain; the contract accepts at most one commitment per intent, so any filler that commits is guaranteed to be the sole eligible claimant. If the committed filler fails to complete the transfer within the timeout, the stake is forfeited to the user as compensation for the delay.

A second route to higher throughput is to use multi-payment contracts on asset chains that support them, executing many transfers in a single transaction. Users wishing to withdraw in the next epoch lock the corresponding tokens and specify their destination addresses; the signing committee then signs a single transaction transferring all locked funds since the last epoch.

Finally, withdrawal throughput can also be increased by having Crossroads control several encumbered addresses per chain rather than one. Since the per-chain bottleneck in~\Cref{sec:crossroads_arch} stems from every withdrawal originating from a single address, distributing the chain's liquidity across multiple encumbered addresses lets the signing committee sign withdrawals from each in parallel, one per address per epoch. The trade-off is that liquidity becomes fragmented across addresses, so withdrawals must be routed to an address holding sufficient funds, and occasional rebalancing is needed when balances drift.

\section{Security of Crossroads}\label{sec:security}

The main security property of Crossroads relates to asset ownership. We define the system as sound if and only if any user can, at any point, unilaterally withdraw their Crossroads net balance to externally owned accounts (EOAs) of their choosing. The net balance is calculated as the sum of all deposited assets, adjusted for the net value of internal transfers, minus any assets that have already been withdrawn. We show that this result implies both correctness and soundness of the system: under the threat model outlined below, no matter the circumstances, a user is always able to withdraw the assets it owns and no adversary can mint or burn tokens deliberately.

Crossroads makes several trust assumptions to achieve soundness, beyond the threshold assumption on the signing committee discussed in~\Cref{sec:security}. We articulate them here.

\subsection{Threat model}
\label{subsec:threat-model}

\mypara{Backend blockchain.} We assume the backend blockchain provides standard execution guarantees: transactions are executed correctly, finalized transactions are permanent, and the chain remains available. A compromise of the backend blockchain (for instance, through a successful attack on its consensus) could lead to corruption of contract state, allowing an adversary to double spend withdrawals. We therefore recommend deploying Crossroads on a backend blockchain whose security guarantees match or exceed those of the asset chains it serves.

\mypara{Signing committee.} Crossroads relies on a $(t,n)$-threshold signing committee. We assume an adversary corrupts strictly fewer than $t$ committee members. Beyond this baseline, we present additional mechanisms (\Cref{sec:security_enhacements}) to mitigate the impact of committee compromise. To achieve soundness we also require the signing committee to be \emph{live} and acting strictly under $\Pi^{\mathsf{TSS}}_{\mathsf{SC}}$.

\mypara{Oracles.} Crossroads relies on per-chain oracles to relay finalized transaction information from asset chains to the backend blockchain. We assume each oracle  only confirms transactions that have been finalized on the source chain. We additionally assume oracles are \emph{live}, meaning that any party can obtain a confirmation for any finalized transaction at any point. A compromised oracle could falsely confirm deposits that never occurred, causing the asset contract to mint unbacked wrapped tokens, or fail to confirm withdrawals, freezing user funds. Different oracle constructions satisfy oracle soundness under different trust assumptions: zkBridge-based oracles rely on the soundness of zero-knowledge proofs and the source chain's consensus, while TEE-based oracles trust the integrity of the hardware enclave. This makes this trust assumption \emph{application specific}. However, the impact of an oracle compromise is contained to its own integration: each asset contract mints tokens only against attestations from its designated oracle, so a faulty oracle for one chain can inflate or freeze only the wrapped assets of that chain.

\subsection{Crossroads soundness}

Let $\mathcal{U}$ be the set of registered accounts to the Crossroads system, and $\mathcal{B}$ the set of asset blockchains which fulfill the requirements of~\ref{sec:requirements}. We model the system's evolution as an ordered sequence of \emph{operations}, where each operation is either a call to the Crossroads asset contract  $\textsf{crsrds}(\mathcal{O}, \mathcal{T}_\mathcal{B})$, a message in the signing-committee protocol $\Pi^{\mathsf{TSS}}_{\mathsf{SC}}$, or the broadcast of a transaction to an asset chain. 
A \emph{transcript} $\vec{\sigma}$ is a finite, ordered sequence of operations. We say an operation is \emph{$\mathsf{acct}$-issuable} if it can be performed using only $\mathsf{acct}$'s backend signing key; in particular, it requires no cooperation from any other backend account, oracle, or external party.

We assume each asset chain $\mathsf{chain}$ defines a validity predicate $\mathsf{Valid}_{\mathsf{chain}}(\mathsf{tx}, \mathsf{st})$ on transactions and chain states, capturing the chain's inclusion checks: well-formedness, signature verification under the sender's key, epoch validity (or non-conflict with prior spends), and assurance the sender's native balance covers $\mathsf{value}(\mathsf{tx}) + \mathsf{fee}(\mathsf{tx})$.

Fix an account $\mathsf{acct} \in \mathcal{U}$, a blockchain $\mathsf{chain} \in \mathcal{B}$, and a transcript $\vec{\sigma}$, and consider the operations of $\vec{\sigma}$ that call $\mathsf{chain}$'s asset contract $\textsf{crsrds}(\mathcal{O}, \mathcal{T}_\mathcal{B})$ and execute successfully, i.e., that pass the contract's assertions and so update balances. Among these, let $\mathsf{Dep}(\mathsf{acct})$ be the \textsf{Deposit} calls credited to $\mathsf{acct}$ and $\mathsf{With}(\mathsf{acct})$ the \textsf{ConfirmWithdrawal} calls attributed to $\mathsf{acct}$ via the transaction binding, and let $\mathsf{Rcv}(\mathsf{acct})$ and $\mathsf{Snd}(\mathsf{acct})$ be the \textsf{Transfer} calls in which $\mathsf{acct}$ is the recipient and the sender, respectively. For each such operation, $\mathsf{amt}$ denotes the amount deposited, withdrawn, or transferred; for $\mathsf{tx} \in \mathsf{With}(\mathsf{acct})$, $\mathsf{fee}$ additionally denotes the on-chain withdrawal fee reported by $\mathcal{O}$. We define the \emph{net balance} of $\mathsf{acct}$ for $\mathsf{chain}$ with respect to $\vec{\sigma}$ as:\begin{align*}
\mathsf{Net}(\mathsf{acct}, \mathsf{chain}, \vec{\sigma}) \;=\;
\sum_{\substack{\mathsf{tx} \in \vec{\sigma} \\ \mathsf{Dep}(\mathsf{acct})}} \mathsf{amt}(\mathsf{tx}) \;+\; \sum_{\substack{\mathsf{tx} \in \vec{\sigma} \\ \mathsf{Rcv}(\mathsf{acct})}} \mathsf{amt}(\mathsf{tx}) \\
-\; \sum_{\substack{\mathsf{tx} \in \vec{\sigma} \\ \mathsf{With}(\mathsf{acct})}} \big(\mathsf{amt}(\mathsf{tx}) + \mathsf{fee}(\mathsf{tx})\big) \;-\; \sum_{\substack{\mathsf{tx} \in \vec{\sigma} \\ \mathsf{Snd}(\mathsf{acct})}} \mathsf{amt}(\mathsf{tx}).
\end{align*}

\begin{definition}[Soundness]
\label{def:soundness}
We say Crossroads is \textbf{sound} if, for any transcript $\vec{\sigma}$, $\mathsf{chain} \in \mathcal{B}$, $\mathsf{acct} \in \mathcal{U}$, $\mathsf{addr}_{\mathsf{dest}}$, $\mathsf{amt}$, $\mathsf{fee}$ with $\mathsf{Net}(\mathsf{acct}, \mathsf{chain}, \vec{\sigma}) \ge \mathsf{amt} + \mathsf{fee}$, and with $\mathsf{bal}[(\mathsf{acct}, \eta)] \ge r$ in the contract state determined by the execution of $\vec{\sigma}$, $\mathsf{acct}$ can produce a finite ordered sequence $S$ of $\mathsf{acct}$-issuable operations such that, for every PPT adversary, with all but negligible probability over the coins of $\Pi_{\textsf{DKG}}$, $\textsf{TSS}$, and the adversary, every transcript $\vec{\sigma}'$ extending $\vec{\sigma}$ that contains $S$ in order includes a transaction $\mathsf{tx}$ broadcast to $\mathsf{chain}$ with $\mathsf{recipient}(\mathsf{tx}) = \mathsf{addr}_{\mathsf{dest}}$, $\mathsf{value}(\mathsf{tx}) = \mathsf{amt}$, and $\mathsf{Valid}_{\mathsf{chain}}(\mathsf{tx}, \mathsf{st}_{\mathsf{chain}}) = 1$, where $\mathsf{st}_{\mathsf{chain}}$ is the chain state at the moment $\mathsf{tx}$ is broadcast.
\end{definition}

Intuitively, soundness captures the end-to-end guarantee that no sequence of valid operations can deprive a user of their assets: whatever value a user has deposited into Crossroads, adjusted for internal transfers and prior withdrawals, remains recoverable.

Soundness thus guarantees a \emph{valid broadcast}, not inclusion. All withdrawals spend from the same encumbered address, so signed transactions for the same epoch are mutually exclusive, and a competing withdrawal may be included first. Losing such a race costs latency, never funds. By single-choice inclusion (Section~3.2), the superseded transaction becomes permanently invalid, while the user's pending-withdrawal balance is untouched, so the user may request a fresh signature for the next epoch and rebroadcast.

Moreover, soundness is a property of the Crossroads protocol, not of the applications built on it. It guarantees that whoever holds a net balance can withdraw it, not that a faulty application won't take that balance from you. If a flawed AMM or lending contract drains your tokens, soundness still holds (the new holder can withdraw), but your assets are gone. As on any smart-contract platform, application-layer security is orthogonal to soundness of the underlying system.

We now prove that the Crossroads functionalities satisfy this property under the threat model presented in \ref{subsec:threat-model}.

\begin{theorem}[System Soundness]
\label{thm:soundness}
Let $\mathsf{chain} \in \mathcal{B}$ be a compatible blockchain satisfying Oracle Soundness, and instantiate the signing committee protocol $\Pi_{\textsf{SC}}^{\textsf{TSS}}$ with distributed key generation and a $(t, n)$-threshold signature scheme $\textsf{TSS}$ that is EUF-CMA secure against an adversary corrupting up to $t-1$ signers. Then, under the Crossroads asset contract $\textsf{crsrds}(\mathcal{O}, \mathcal{T}_{\mathcal{B}})$ and $\Pi_{\textsf{SC}}^{\textsf{TSS}}$, Crossroads satisfies soundness (Definition~\ref{def:soundness}) against any PPT, resource-bounded adversary. 
\end{theorem}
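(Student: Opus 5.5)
The proof is constructive: we exhibit the sequence $S$ and then show that every extension $\vec{\sigma}'$ containing $S$ yields a valid broadcast. Take $S$ to be: (i) $(\textsf{``LockWithdrawal''}, \mathsf{chain}, \mathsf{amt}+\mathsf{fee})$ sent from $\mathsf{acct}$ to $\textsf{crsrds}$; (ii) the off-chain request $(\textsf{``Sign''}, \mathsf{chain}, \mathsf{amt}, \mathsf{fee}, \mathsf{addr}_{\mathsf{dest}})$ to every $s_i \in \mathcal{S}$; (iii) local combination of $t$ valid shares into $(\mathsf{tx},\sigma)$; (iv) broadcast of $\mathsf{tx}$ to $\mathsf{chain}$. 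If $\mathsf{tx}$ is superseded, repeat (ii)–(iv) in the next epoch. All of these operations are $\mathsf{acct}$-issuable, since they need only $\mathsf{acct}$'s backend key and liveness of the committee and oracle, which the threat model assumes.

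The first step is an accounting invariant, proved by induction over the operations of $\vec{\sigma}$. For every account, $\mathsf{bal}[(\mathsf{acct},\mathsf{chain})] + \mathsf{PW}[\mathsf{acct},\mathsf{chain}] = \mathsf{Net}(\mathsf{acct},\mathsf{chain},\vec{\sigma})$. The cases are as follows. \textsf{Deposit} and \textsf{Transfer} change $\mathsf{bal}$ and $\mathsf{Net}$ by the same amount. \textsf{LockWithdrawal} moves value from $\mathsf{bal}$ to $\mathsf{PW}$, leaving the sum unchanged. \textsf{ConfirmWithdrawal} subtracts $\mathsf{amt}+\mathsf{fee}$ from $\mathsf{PW}$ exactly as $\mathsf{With}$ does in $\mathsf{Net}$; here oracle soundness ensures the reported amounts are the finalized on-chain ones, and $\mathsf{TB}.\textsf{Extract}$ attributes the spend to the correct account.

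To run step (i) we need $\mathsf{bal}[(\mathsf{acct},\mathsf{chain})] \ge \mathsf{amt}+\mathsf{fee}$. If part of the net balance already sits in $\mathsf{PW}$, we lock only the remainder, so that $\mathsf{PW}[\mathsf{acct},\mathsf{chain}] \ge \mathsf{amt}+\mathsf{fee}$ afterwards. The hypothesis $\mathsf{bal}[(\mathsf{acct},\eta)]\ge r$ covers the reward escrow, so $\textsf{CanSign}$ returns $1$.

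Next, every honest signer answers the request using the same epoch value $e$ and builds the same $\mathsf{tx} = \textsf{TB}.\textsf{Embed}(\mathsf{chain}, \textsf{BuildTx}(\cdot,e), \mathsf{acct})$. By robustness of $\textsf{TSS}$ and the $\mathsf{blame}$ loop, $\mathsf{acct}$ obtains a $\sigma$ with $\mathsf{Verify}(\mathsf{pk}_\mathsf{chain},\mathsf{tx},\sigma)=1$, because at least $t$ signers are honest and live. It remains to check $\mathsf{Valid}_\mathsf{chain}(\mathsf{tx},\mathsf{st})$:
\begin{itemize}
\item well-formedness and signature validity follow from correctness of $\textsf{TSS}$ and of DKG;
\item epoch validity holds if no other spend for epoch $e$ has finalized; otherwise single-choice inclusion makes $\mathsf{tx}$ permanently invalid, $\mathsf{PW}$ is untouched, and the loop retries at $e+1$;
\item balance sufficiency requires that the native balance of $a_\mathsf{chain}$ be at least $\sum_{\mathsf{acct}'}(\mathsf{bal}+\mathsf{PW})$, so in particular at least $\mathsf{amt}+\mathsf{fee}$.
\end{itemize}

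The main obstacle is this backing invariant. The native balance of $a_\mathsf{chain}$ decreases only through signed transactions. By EUF-CMA against $t-1$ corruptions, except with negligible probability every valid signature on a message from $a_\mathsf{chain}$ was produced by honest signers. Those signers sign only after $\textsf{CanSign}$ succeeded, and the embedded binding attributes the spend to an account whose $\mathsf{PW}$ covered $\mathsf{amt}+\mathsf{fee}$. Oracle soundness ties each mint to a real finalized deposit, and $D[\mathsf{tx}]$ prevents double minting. Conversely, each finalized outflow is eventually debited from $\mathsf{PW}$ via \textsf{ConfirmWithdrawal}.

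The delicate point is outflows that are on-chain but not yet confirmed. A signer may issue several signatures for the same epoch to different accounts, and $\mathsf{PW}$ is debited only later. I would handle this by proving the invariant per epoch. Because the epoch advances only on confirmation, at most one outflow per epoch is ever included, and it is debited before any further epoch's transaction can be signed. So at the moment $\mathsf{tx}$ is broadcast for epoch $e$, every outflow up to $e-1$ has been accounted for, and the invariant holds with respect to the current on-chain balance. A hybrid argument then replaces the real signatures with an EUF-CMA game, and the per-epoch sequence converges because the user retries until their epoch-$e$ transaction is the one included.
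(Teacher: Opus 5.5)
Your decomposition follows the paper's. You prove $\mathsf{bal}+\mathsf{PW}=\mathsf{Net}$ by induction, establish a backing invariant for the encumbered address via oracle soundness and an EUF-CMA reduction, and then run a lock/sign/broadcast sequence with retries. Your per-epoch handling of outflows that are included on-chain but not yet confirmed is more careful than the paper's Part~2. That part simply assumes every spend is already reflected by a $\textsf{ConfirmWithdrawal}$, and its discharge only rules out unauthorized (forged) spends.

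\textbf{The gap is termination of the retry loop.} Definition~\ref{def:soundness} asks for a \emph{finite} sequence $S$ such that \emph{every} extension of $\vec{\sigma}$ containing $S$ in order has a valid broadcast, so the adversary chooses the interleaving. Your claim that the sequence ``converges because the user retries until their epoch-$e$ transaction is the one included'' assumes what must be shown. An adversary can, in every epoch, get a competing withdrawal for the same epoch included between your signing and your broadcast; this is the strategy of Appendix~\ref{subsec:dos}. That invalidates every round, and nothing in your argument bounds the number of rounds.

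You never use the theorem's hypothesis that the adversary is resource-bounded, and that is exactly what the paper uses here. Each superseding withdrawal is a finalized spend from the encumbered address. By the fee-replacement requirement it costs a positive on-chain fee $c>0$, and it must outbid your fee to win the race. Hence an adversary with budget $B$ can supersede at most $k\le B/c$ times, and $k+1$ sign-and-broadcast rounds suffice, which makes $S$ finite.

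Relatedly, your loop should target an epoch-valid broadcast, which is all the definition requires, not inclusion, which soundness explicitly does not promise. A smaller omission, shared with the paper: $\textsf{GetEpoch}$ advances only on $\textsf{ConfirmWithdrawal}$, so each retry round should include your own $\textsf{ConfirmWithdrawal}$ call on the superseding transaction, which anyone may submit.
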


The proof is deferred to Appendix~\ref{appdx:proofs}. The assumption on the bound of adversarial resources limits the number of withdrawals the adversary can finalize to compete with an honest user's, and hence the number of times it can deny any single withdrawal; we discuss denial-of-service attacks, along with mitigations, in Appendix~\ref{subsec:dos}.

\subsection{Modular security enhancements}\label{sec:security_enhacements}

Committee-based protocols that custody large volumes of funds are frequent attack targets~\cite{krause20251,chainalysis2026kelpdao,zhang2024cross}, so we now present several mechanisms that strengthen the security guarantees of Crossroads beyond the threshold assumption.

\paragraph{Heterogeneous TEE committee.} A first mechanism is to require (and enforce via attestation) that all committee nodes run within TEEs. Using a diverse mix of hardware architectures---such as Intel SGX, AMD SEV, and AWS Nitro Enclaves---further ensures that a hardware-level vulnerability specific to one manufacturer cannot compromise all committee members at once.

\mypara{Security Council.} Another layer of security would be to have an emergency council with the power to freeze Crossroads contracts. This council could act as a decentralized governance layer designed to intervene during technical failures that automated systems might not catch. While the signing committee provides cryptographic security, the Council provides oversight to protect user capital from unforeseen exploits.

\mypara{Hybridization.} When Crossroads is integrated with a blockchain that has smart contract capabilities, it is possible to introduce an additional layer of security that remains effective even in the event of a TEE and committee compromise. In this scenario, funds can be stored within a smart contract on the native blockchain. This allows the smart contract to enforce programmatic security measures, such as limiting withdrawal throughput to prevent mass drainage, or providing a freeze functionality that can be activated by an external security council via a threshold signature.

\mypara{Accountable threshold signing.} Standard threshold signatures conceal which committee members contributed to a given signature, limiting the system's ability to hold misbehaving signers accountable. Accountable-subgroup multisignatures~\cite{10.1145/501983.502017} instead produce signatures that cryptographically reveal the signing subgroup to any verifier, with no reliance on off-chain transcripts or trusted logs. Adopting such a scheme in Crossroads would allow any signature produced outside the authorized withdrawal flow to be traced back to the participating signers, enabling on-chain slashing or removal from the committee.

\section{Implementation}\label{sec:implementation}

We provide a complete prototype of Crossroads with three blockchains currently integrated: Bitcoin, Ethereum, and Solana. Others may be integrated permissionlessly. The prototype uses a local EVM devnet as the backend blockchain as well as a simple MPC signing committee consisting of three signers. We make the source code of our prototype publicly available at {\url{https://github.com/trate3/crossroads}}. 

\subsection{Setup}
We instantiate the signing committee using the \textit{threshold-signatures} library by NEAR~\cite{near_threshold_signatures_2026}.
Committee members participate in threshold signing under MPC to compute signatures---specifically, threshold ECDSA using a two-phase signing protocol~\cite{damgrard20fastthreshold} and FROST-Ed25519~\cite{komlo2020frost}. ECDSA and Ed25519 cover the majority of signing algorithms used in current blockchains. 

\paragraph{Bootstrap phase.} The signing committee requires a bootstrap phase for establishing the initial membership of the committee as well as generating the committee's MPC root key. During the bootstrap phase, a smart contract is posted on the backend blockchain. Each member of the signing committee registers its identity to the smart contract. In our prototype, the committee members must be approved by a coordinator of the bootstrap contract (who has no other authority in the system) but in practice this step could require additional certification, such as a remote attestation of the committee members' node software.

Once all the committee members are registered, the bootstrap coordinator distributes connection information (e.g. network addresses) to the committee members, who then begin distributed key generation and create the root secret key material.

\paragraph{Asset contracts.} We cleanly separate the Crossroads asset contract into two components: a chain-agnostic ERC-20 contract which is reused across all Crossroads tokens, and a contract responsible for the chain-specific implementation. The chain-specific contract must decode deposit/withdrawal transactions and transaction bindings, prove transaction inclusion (often with a separate oracle), and decide how much a serialized transaction costs or spends on its asset chain.

\subsection{Ethereum}
Ethereum's account-based address system is well suited for Crossroads; deposits into an encumbered Ethereum account never not impose additional costs on withdrawals.

\paragraph{Epochs.} Ethereum transactions contain a natural gating mechanism, the nonce, as all transactions from the same signer must be included in nonce order. When a withdrawal transaction is proven included, the asset contract then allows signatures using the next nonce. For simplicity, only typical type-2 transactions may be signed, which prevents newer transaction types from interfering with our accounting.

\paragraph{Transaction binding.} As described in~\Cref{subsec:architecture-withdrawals}, deposits and signed withdrawal transactions must be attributable to a Crossroads account, and this is accomplished through a transaction binding. Although Ethereum transactions have a data field, this field is used for smart contract interactions.
Fortunately, extra data appended after the required transaction data is often ignored by smart contracts. Therefore, we use the final 20 bytes of the transaction calldata as the transaction binding.

\paragraph{Oracle.} Ethereum is secured through proof of stake, which cannot be verified succinctly in smart contracts, so we rely on an external block hash oracle. Transactions and their receipts are checked through Merkle proofs to their respective roots in the block header.

\subsection{Bitcoin}
Bitcoin follows the unspent transaction outputs (UTXO) model: transactions consist of at least one input (a previous unspent output) and at least one output. This affects our deposit model. If Crossroads were to accept payments from depositors as a typical Bitcoin wallet does, it would accumulate a set of spendable UTXOs. Bitcoin transaction fees are charged by the size of a transaction, so spending a large number of small UTXOs, each of which must be referenced in the transaction, adversely affects withdrawal costs.

We remedy this by requiring depositors to perform the UTXO consolidation (similar to the custom address deposits of~\Cref{subsec:optimized-deposits}). Rather than collecting new UTXOs, the Bitcoin asset contract instead maintains a single UTXO per encumbered account. The asset contract permits anyone to request transaction signatures with the current encumbered UTXO as an input---but only if the requester combines it with another input resulting in a net deposit to the encumbered account. When inclusion is proven for a Bitcoin transaction spending the encumbered account's UTXO, the asset contract updates the current UTXO to the newly created one and increments the epoch number.

\paragraph{Epochs.} The asset contract's single UTXO is initialized using a deposit transaction with the transaction binding, but all future deposits must use the encumbered account's last UTXO. UTXOs from deposits made to the encumbered account which do not conform to the binding can be handled as separate deposit accounts as described in~\Cref{sec:practical}, but this is not part of our prototype.

\paragraph{Transaction binding.} The transaction binding we choose for Bitcoin transactions is a zero-value \texttt{OP\_RETURN} output in the transaction, which contains a Crossroads address. For deposits, the address is where deposited funds should be issued, and for withdrawals, it is the spender account (enforced by the smart contract).

\paragraph{Oracle.} Our transaction oracle for Bitcoin validates a Merkle proof of transaction inclusion within the Bitcoin block in which it was mined. The block header can be verified using Bitcoin's simplified payment verification scheme, i.e., a light client of Bitcoin which runs directly on the backend blockchain.

The requirement for signatures from the signing committee during the consolidation step of a deposit, spending the same Crossroads UTXO a withdrawal would spend, introduces an additional bottleneck not present in deposits on account-based blockchains.
We note that combining multiple deposits or withdrawals into a single transaction would further increase throughput and reduce fees. Furthermore, multiple encumbered accounts can be used in parallel to account for this limitation (similar to how centralized exchanges maintain multiple hot wallets on each blockchain).

\subsection{Solana}
Like Ethereum, Solana is an account-based blockchain and accepts deposits into a Crossroads encumbered account without requiring interaction with the signing committee. Solana transactions are composed of a series of instructions to \textit{programs} (equivalent to smart contracts).

\paragraph{Epochs.} Solana does not have a native per-account nonce by default, but it does support one, called the ``durable nonce''~\cite{solana_durable_nonces}. A one-time payment initializing a durable nonce account must be made to cover the cost of storing the nonce. This requires an initialization step of two System Program instructions calling \texttt{CreateAccount} and \texttt{InitializeNonceAccount} on a fresh public key derived from the committee, separate from the primary account which holds deposited SOL tokens. The authority of this nonce account is set to the primary account during this step.\footnote{In Solana, the nonce account's ``authority'' is the public key authorized to advance the nonce.}

\paragraph{Transaction binding.} All signed Crossroads withdrawals contain three instructions: (1) \texttt{AdvanceNonceAccount}, (2) \texttt{Transfer}, and (3) \texttt{Memo}, which includes the transaction binding (the spender's Crossroads account). If a signed transaction with an incorrect nonce is sent to the network (e.g., when a different transaction containing the same nonce has already been included), the transaction is considered invalid and is dropped without penalty.

\paragraph{Oracle.} Solana does not have a transaction Merkle tree, so we instead integrate a TEE-based transaction oracle service. This service aggregates the responses of a quorum of independent Solana RPC providers to the finalization state of a given transaction and, if they agree, produces a response signed by a key that itself is bound to the program's attestation. The oracle smart contract verifies the attestation and adds the transaction hash to a set of finalized transactions.

\subsection{Evaluation}
We evaluated our implementation to assess the feasibility of running our system.

Although we used a free local Ethereum devnet as the backend blockchain for most of our testing, in practice Crossroads would be deployed on a public EVM blockchain.
In~\Cref{tab:impl-transaction-costs} we report the cost of using Crossroads to deposit, transfer, and withdraw each of the three assets we integrated on five compatible backend blockchains. Unsurprisingly, L2 blockchains---Optimism, Base, and Arbitrum---offer better fee rates than Ethereum, as does Oasis Sapphire, a TEE-based blockchain used for privacy applications described in~\Cref{sec:applications}. Moreover, as shown in \Cref{appdx:signing-latency}, Crossroads deposits and withdrawals cost little more than ordinary native transfers on each asset chain.

\paragraph{Throughput.} The base Crossroads design is constrained by withdrawals: there is one withdrawal per epoch per encumbered address, and each epoch is as long as the transaction finalization oracle requires. Under maximum security oracles that wait for six confirmations on Bitcoin and full finality on Ethereum and Solana, this means one transaction per encumbered account every 60 minutes, 15 minutes, and 13 seconds, respectively. Nonetheless, as described in~\Cref{sec:faster-withdrawals}, withdrawal throughput can easily be scaled across multiple encumbered accounts associated with that asset.

\paragraph{Latency.} If an encumbered account does not currently have a withdrawal waiting to finalize, it is immediately available for withdrawal use; otherwise, a user must wait for the existing transaction to finalize. We report the latency of signature requests using our prototype in~\Cref{appdx:signing-latency}.

\paragraph{Signing committee resources.} We measured the resources required to run an individual signing committee member under multiple committee sizes. Even with a committee size of 15, each node used under 20 MB of memory to run the signer application (on top of OS memory) and at 20 signing requests per second used approximately 637.6 KB/s of network bandwidth, split equally in both directions. Thus, a committee node would fit in most small-sized cloud compute instances at low cost.

\begin{table}[t]
	\centering
	\scalebox{0.82}{%
	\begin{tabular}{@{}ll|lllll@{}}
    \toprule
	Action & Asset & Ethereum & Optimism & Base & Arbitrum & Sapphire \\
	\midrule
	\multirow{3}{*}{\makecell[l]{Prove\\deposit}}
	& Ethereum & \$0.1418 & \$0.000573 & \$0.00323 & \$0.0109   & \$0.000205 \\
	& Bitcoin  & \$0.1564 & \$0.000625 & \$0.00355 & \$0.0120   & \$0.000226 \\
	& Solana   & \$0.0733 & \$0.000289 & \$0.00166 & \$0.00564 & \$0.000106 \\
	\midrule
	\makecell[l]{Transfer} & (any) & \$0.0223 & \$0.000090 & \$0.00051 & \$0.00173 & \$0.000032 \\
	\addlinespace[2pt]
	\makecell[l]{Burn for\\spending} & (any) & \$0.0487 & \$0.000189 & \$0.00110 & \$0.00374 & \$0.000071 \\
	\midrule
	\multirow{3}{*}{\makecell[l]{Prove\\with-\\drawal}}
	& Ethereum & \$0.1411 & \$0.000572 & \$0.00321 & \$0.0109   & \$0.000204 \\
	& Bitcoin  & \$0.1584 & \$0.000635 & \$0.00360 & \$0.0122   & \$0.000229 \\
	& Solana   & \$0.0758 & \$0.000301 & \$0.00172 & \$0.00584 & \$0.000110 \\
	\bottomrule
	\end{tabular}%
	}
	\caption{Crossroads operation costs in USD targeting our three asset integrations, under different backend chains.
	The backend native token prices, gas prices, and L1 data fees are based on those from June 10, 2026: ETH \$1{,}624.48, BTC \$61{,}575.66, SOL \$64.03, and ROSE \$0.006203.}
	\label{tab:impl-transaction-costs}
\end{table}

\section{Crossroads Applications}\label{sec:applications}

\newcolumntype{a}{>{\columncolor{white}}c}
\newcommand{\fwcell}[2]{%
	\makecell[t]{%
		\begin{minipage}[t]{\getColumnWidth{#1}}
			\raggedright
			#2
		\end{minipage}%
	}
}

\newcommand{\getColumnWidth}[1]{%
	\ifcase#1
	\or
	\or
	\or 6.5cm %
	\or 6.5cm %
	\or 2.8cm %
	\else 4cm
	\fi
}

\definecolor{carnelian}{HTML}{B31B1B}

\begin{figure*}[!th]
    \renewcommand{\arraystretch}{1}
    \resizebox{\textwidth}{!}{
    \begin{NiceTabular}{a|cccc}[color-inside, cell-space-top-limit=2pt, cell-space-bottom-limit=4pt]
    \CodeBefore
    \rowcolors{2}{gray!20}{}
\Body
     \toprule
     \rowcolor{gray!50}
     \textbf{Category} & \textbf{Application} & \textbf{Addressed Issue} & \textbf{Crossroads Enables} & \textbf{Backend Contract} \\
     \midrule
    
    \multirow{8}{*}{\makecell[t]{Wallets \&\\ Liquidity}} & \multirow{1.6}{*}{\makecell[t]{\textbf{Chain-Agnostic} \\ \textbf{Wallet}}} & \fwcell{3}{Managing assets across blockchains often requires separate wallets and tooling} & \fwcell{4}{A single Crossroads account to send transactions on all integrated chains} & \fwcell{5}{---} \\
    
    & \multirow{3}{*}{\makecell[t]{\textbf{Cross-Chain} \\ \textbf{DEXs}}} & \fwcell{3}{Cross-chain swaps support few networks and charge high fees; users fall back on CEXs or bridge aggregators} & \fwcell{4}{Standard AMM contracts to perform atomic swaps between native assets from arbitrary chains} & \fwcell{5}{AMM pool (e.g., Uniswap~\cite{uniswapv3})} \\
    
    & \multirow{3}{*}{\makecell[t]{\textbf{Universal} \\ \textbf{Receive Addresses}}} & \fwcell{3}{Donation recipients and merchants often accept only particular assets, forcing senders to exchange funds themselves} & \fwcell{4}{Recipient-bound receive addresses on all integrated chains, with automatic exchange to a chosen token} & \fwcell{5}{Custom deposit address and swap contracts} \\

    \midrule

    \multirow{13}{*}{\makecell[t]{Universal \\ Assets}} & \multirow{3}{*}{\makecell[t]{\textbf{Universal} \\ \textbf{Stablecoins}}} & \fwcell{3}{Stablecoin chain support is selected by centralized issuers (e.g., USDT is on only 13 of 150+ active chains)} & \fwcell{4}{Permissionless stablecoin deployment and spending on any integrated chain} & \fwcell{5}{Stablecoin issuance contract} \\
    
    & \multirow{3}{*}{\makecell[t]{\textbf{Cross-Chain} \\ \textbf{Staking}}} & \fwcell{3}{Staking protocols can only be secured by native assets, limiting them to their own chain's liquidity and stability} & \fwcell{4}{Securing protocols on one chain with liquid, stable assets from another (e.g., ETH or BTC)} & \fwcell{5}{Staking / slashing contract} \\

    & \multirow{3}{*}{\makecell[t]{\textbf{Universal} \\ \textbf{Testnet Faucet}}} & \fwcell{3}{Obtaining testnet funds relies on faucets that are frequently rate-limited, depleted, or gated behind existing reputation or identity} & \fwcell{4}{Faucets to convert user-provided value (e.g., via PoW) directly to test network funds, and users to integrate new test networks} & \fwcell{5}{Faucet token contract} \\

    & \multirow{3}{*}{\makecell[t]{\textbf{Cross-Chain} \\ \textbf{Lending}}} & \fwcell{3}{Borrowing against assets on one chain to obtain assets on another requires custodial intermediaries} & \fwcell{4}{Collateral from any chain to borrow any Crossroads asset, enabling atomic cross-chain shorts} & \fwcell{5}{Lending pool (e.g., Morpho~\cite{morpho} or Euler~\cite{euler})} \\

    \midrule

    \multirow{10}{*}{\makecell[t]{Privacy \&\\ Compliance}} & \multirow{2.7}{*}{\makecell[t]{\textbf{Private} \\ \textbf{Payments}$^\ast$}} & \fwcell{3}{On-chain transactions are publicly visible; CEX-level privacy requires trusting a custodian~\cite{cexdex2026report}} & \fwcell{4}{Private transactions which only leak the size of deposits and withdrawals when funds enter or leave Crossroads} & \fwcell{5}{---} \\

    & \multirow{3}{*}{\makecell[t]{\textbf{Private Asset} \\ \textbf{Management}$^\ast$}} & \fwcell{3}{Institutional asset management requires privacy, yet most liquidity lies on public chains, forcing a custody–liquidity trade-off} & \fwcell{4}{Private institutional management of public-chain assets with fine-grained disclosure policies for balances and transfers} & \fwcell{5}{Disclosure-management contract} \\

    & \multirow{3}{*}{\makecell[t]{\textbf{On-Chain} \\ \textbf{Compliance}}} & \fwcell{3}{Compliance typically requires off-chain enforcement by centralized, opaque entities} & \fwcell{4}{Transparent, smart-contract-enforced compliance policies (e.g., whitelisting), composable with private applications} & \fwcell{5}{KYC, whitelisting, and other policy contracts} \\

    \bottomrule
    \CodeAfter
        \tikz \fill [carnelian] (3-|2) rectangle ($(4-|2)!0.07!(4-|3)$) ;
        \tikz \fill [carnelian] (7-|2) rectangle ($(8-|2)!0.07!(8-|3)$) ;
        \tikz \fill [carnelian] (9-|2) rectangle ($(10-|2)!0.07!(10-|3)$) ;
\end{NiceTabular}
}
\caption{Table of example applications Crossroads enables. All applications rely on the ERC-20 asset contracts; the \textbf{Backend Contract} column lists additional smart contracts required for each application. Applications marked with $^\ast$ require a private backend blockchain (e.g., Oasis Sapphire). Rows marked in \textcolor{carnelian}{red} indicate applications we have implemented.}
\label{fig:crossroads-applications}
\end{figure*}

Crossroads enables many different applications that interact with assets across multiple chains. We present a range of these applications, summarized in~\Cref{fig:crossroads-applications}, each falling under one of three headings: (1) Wallets and liquidity, (2) Universal assets, and (3) Privacy and compliance. We have implemented one application in each category, which we describe in this section. We defer the discussion of the others to~\Cref{appdx:other_applications}.

\subsection{Wallets and liquidity}

Crossroads' most immediate application lies in how users hold, send, and exchange tokens once all chains' assets share the same programmable backend. Today, this generally requires a separate wallet and/or gas token per chain. Crossroads' representation of each asset in the backend chain as ERC-20 tokens, however, yields a \emph{chain-agnostic wallet} from a single key pair, enables \emph{universal receive addresses} that auto-convert received funds to a recipient's asset of choice, and enables the use of standard decentralized exchange (DEX) smart contracts to enable fast, cheap, and transparent \emph{cross-chain exchange}.

\mypara{Cross-chain decentralized exchange.} The current cross-chain exchange landscape remains heavily centralized. Some decentralized services offer cross-chain exchange but are typically opaque, permissioned, and not fee-efficient. This leaves many smaller blockchains excluded from major interoperability frameworks and forces users onto inefficient or custodial pathways to move assets between them.

Crossroads provides a transparent, permissionless, and fee-efficient alternative. Since exchanges within Crossroads are equivalent to atomic swaps between the corresponding native assets, any standard DEX contract on the backend blockchain can serve as a cross-chain exchange for the cost of a single backend chain transaction. This includes AMMs such as Uniswap~\cite{uniswapv3}, on-chain order books, RFQ systems, batch auctions, and off-chain order matching systems, such as 0x Protocol~\cite{0xprotocol} or Airswap~\cite{airswap}.

In our implementation, we ported Uniswap V2 to a backend chain and launched liquidity pools between Crossroads assets. If Crossroads is deployed on a backend chain that is already well integrated with major assets, Crossroads tokens can route through those existing pools.

\subsection{Universal assets}

A second class of applications explores how cross-chain assets can enable applications previously constrained to one chain, such as staking, lending or stablecoin issuance.

Of these, we have implemented a universal testnet faucet, which converts proof-of-work from one chain directly into testnet funds from any integrated testnet.

\mypara{Universal testnet faucet.}  Existing testnet token faucets are frequently rate limited, depleted, or gated behind existing reputation or identity. This adds friction for developers to test their applications and makes it difficult for privacy-conscious users to obtain test funds. Moreover, AI coding agents are increasingly used to develop and test applications autonomously, yet they cannot clear the identity proofs that faucets require, forcing a human back into the loop.

We have implemented a faucet in Crossroads that converts user-supplied proof-of-work directly into funds on a target testnet. To obtain testnet tokens, a user or agent directs hashpower toward a proof-of-work chain. In our implementation, this is Monero, which supports CPU-based mining with its use of the RandomX hash function. The resulting coins are deposited into Crossroads and automatically swapped for the desired testnet tokens through a backend DEX. Because the only requirement is compute, our faucet provides funds without rate limits, reputation, identity requirements, or other anti-Sybil protections.

Crossroads' permissionless integration of new chains allows anyone to add new testnets.
Testnet token liquidity is ensured through arbitrage: arbitrageurs are incentivized to refill testnet tokens into the DEX if its reserves run low.

\subsection{Privacy and compliance}

Users of centralized exchanges enjoy privacy that public blockchains lack: their balances and trades are visible only to the exchange. Crossroads aims to match this guarantee in a decentralized setting. By choosing a private backend blockchain, transfer transactions within Crossroads remain fully private and withdrawal details are visible only to its signing committee, enabling \emph{private payments} and \emph{private asset management}. To balance the concerns such privacy raises, compliance policies such as KYC whitelisting can be enforced transparently by smart contracts on the same transactions (\emph{on-chain compliance}).

We have realized such privacy in Crossroads through implementation with  a \emph{private} backend blockchain, namely Oasis Sapphire~\cite{cheng2019ekiden, oasis2023sapphire}.

\mypara{Private payments.} Oasis Sapphire uses TEEs to conceal contract state and calldata, exposing them only to parties the contract itself authorizes. In a Sapphire-based Crossroads deployment, therefore, internal activity is  concealed: where transfers and exchanges never leave the backend chain, the asset contract might disclose an account's balance only to its owner; then, external observers of the backend chain would learn only metadata (a transaction's timing, destination address, and fee payment). Deposits and withdrawals on the asset chains are public, which ties each such transaction to a backend account identifier. Because backend accounts are pseudonymous, public observers can link these transactions to an account but learn little about its internal activity.

Beyond what is public, the Crossroads signing committee sees each withdrawal request it signs, including the requesting account, even if those are not submitted on chain. However, that leakage can itself be minimized by requiring committee nodes to run within trusted enclaves, in which case operators would learn at most that a signature was issued. Our implementation realizes the first guarantee: with Sapphire as the backend, transfers and exchanges within Crossroads are more private, and the committee runs as plain MPC signers (Section~\ref{sec:implementation}). A TEE-based committee node implementation remains future work.

\section{Related Work}\label{sec:related-work}

\paragraph{Cross-chain messaging protocols and bridges.} Cross-chain messaging protocols deliver arbitrary messages from a contract on one chain to a contract on another. Asset bridges are cross-chain messaging protocols specifically for sending assets from one chain to another. We give a non-exhaustive overview of representative ones. Wormhole~\cite{wormhole} and Axelar~\cite{axelar} use validator set committees to attest to cross-chain messages. Wormhole's Token Bridge and Axelar's Satellite (and aggregator Squid) sit on top of these messaging layers to provide token bridging functionality. Hyperlane~\cite{hyperlane} and LayerZero~\cite{zarick2025layerzero} both  let applications configure their own verification logic (i.e. applications can swap in different verifiers, including other cross-chain protocols)---Hyperlane via their Interchain Security Modules, and LayerZero V2 via their Decentralized Verifier Networks. Hyperlane's Warp Routes and LayerZero's Stargate are the canonical asset bridges built on top of these messaging layers Polyhedra Network~\cite{polyhedra} instantiates the zkBridge protocol~\cite{xie2022zkbridge} for both EVM-based chains and Bitcoin. 

\mypara{Cross-chain exchanges.} While bridges transfer a given asset across chains, cross-chain exchanges allow users to swap \emph{between} assets on different chains. Atomic swaps via Hash Time Locked Contracts (HTLCs)~\cite{10.1145/3212734.3212736} are one technique, with subsequent work extending it to arbitrary signature schemes and removing the dependence on on-chain scripting~\cite{thyagarajan2022universal}. Cross-chain AMMs eliminate the need for matching counterparties altogether by pooling liquidity: THORChain~\cite{thorchain} and Chainflip~\cite{chainflip} each operate a dedicated consensus layer that records balances and prices, while threshold-signature-controlled vaults custody the native assets. These systems share Crossroads' backend chain approach but fix a single application at the protocol level: their validators must run nodes for every supported chain, and integrating a new chain is a permissioned protocol decision. Despite the existence of decentralized cross-chain exchanges, centralized exchanges such as Binance, Bybit and Coinbase Exchange dominate the exchange landscape~\cite{coingeckoexchanges}.

\mypara{MPC-based custody.} Threshold signing, used by Crossroads' signing committee, is already widely deployed for digital asset custody. Commercial platforms such as Fireblocks~\cite{fireblocks} secure institutional wallets by sharding signing keys across parties using threshold-ECDSA protocols~\cite{canetti2020uc}, often hardened with TEEs; ZenGo~\cite{zengo} applies two-party ECDSA~\cite{lindell2017fast} to consumer wallets. In all these systems, however, the signing policy is enforced by the provider's off-chain policy engine, opaque to users and not programmable. Crossroads adds a layer of programability via the use of the backend blockchain: signatures are issued only for withdrawals the backend asset contract has authorized, making the custody policy transparent and programmable.

\mypara{Account abstraction.} Account abstraction is today's dominant route to wallet-layer interoperability: ERC-4337~\cite{erc4337} and EIP-7702~\cite{eip7702} make account logic programmable, and \emph{chain-abstraction} systems build on this to let a single wallet control funds across multiple chains. Particle Network~\cite{particle}, for instance, deploys a smart account for the user on each chain, all tied to one key pair. NEAR chain signatures~\cite{nearchainsig} uses an MPC network holding the users keys across multiple chains and, at the account's request, produces the signatures to spend from them. Both give the user a single account, but its assets stay scattered on their native chains, and moving them still requires per-operation bridging. Crossroads achieves the same abstraction with one account on the backend chain, but additionally enables cross-chain operations as ordinary contract calls.

\section{Conclusion}\label{sec:conclusion}

We presented Crossroads, a smart contract layer for chain-abstracted assets that unifies cross-chain asset management on a single backend blockchain. While this presents a step forward towards a more transparent and flexible interoperability framework, it also gives rise to a number of future research directions:

\mypara{Asset-specific policies.} Crossroads currently focuses on cross-chain asset transfers. Many assets, however, carry capabilities tied to their source chains beyond their value: DAO tokens grant voting rights, NFTs unlock token-gated sites or events, etc. As shown in Liquefaction~\cite{austgen2025liquefaction}, key encumbrance can preserve these capabilities for the current owner via per-asset access policies. Extending Crossroads with such policies in a general way remains an open problem.

\mypara{Responsible and fair upgrades.} Crossroads relies on the security of its oracles to safeguard assets. However, an asset chain might hard fork, and its oracle implementation must be upgraded. Designing governance mechanisms that handle oracle upgrades both responsibly and securely is also an open problem.

\ifACM \bibliographystyle{ACM-Reference-Format.bst} \fi
\ifUSENIX \bibliographystyle{plain} \fi
\ifIEEE \bibliographystyle{plain} \fi
\ifLNCS \bibliographystyle{Conferences/LNCS/splncs04.bst} \fi

\bibliography{references} 

\begin{thebibliography}{10}

\bibitem{aave}
Aave.
\newblock Aave: Open source liquidity protocol.
\newblock \url{https://aave.com}, 2026.
\newblock Accessed: 2026-04-24.

\bibitem{across}
{Across Protocol}.
\newblock Across: An intent-based cross-chain bridge.
\newblock \url{https://docs.across.to}, 2026.
\newblock Accessed: 2026-06-11.

\bibitem{uniswapv3}
Hayden Adams, Noah Zinsmeister, Moody Salem, River Keefer, and Dan Robinson.
\newblock Uniswap v3 core.
\newblock \url{https://uniswap.org/whitepaper-v3.pdf}, 2021.
\newblock Accessed: 2026-03-24.

\bibitem{amd_sev_snp}
{Advanced Micro Devices}.
\newblock {AMD SEV-SNP: Strengthening VM Isolation with Integrity Protection
  and More}.
\newblock
  \url{https://www.amd.com/system/files/TechDocs/SEV-SNP-strengthening-vm-isolation-with-integrity-protection-and-more.pdf},
  2020.
\newblock Accessed: 2026-04-28.

\bibitem{aws_nitro}
{Amazon Web Services}.
\newblock {AWS Nitro Enclaves User Guide}.
\newblock
  \url{https://docs.aws.amazon.com/enclaves/latest/user/nitro-enclave.html},
  2024.
\newblock Accessed: 2026-04-28.

\bibitem{austgen2025liquefaction}
James Austgen, Andrés Fábrega, Mahimna Kelkar, Dani Vilardell, Sarah Allen,
  Kushal Babel, Jay Yu, and Ari Juels.
\newblock Liquefaction: Privately liquefying blockchain assets.
\newblock In {\em 2025 IEEE Symposium on Security and Privacy (SP)}, pages
  1493--1511, 2025.

\bibitem{euler}
Michael Bentley and Doug Hoyte.
\newblock Euler whitepaper.
\newblock
  \url{https://github.com/euler-xyz/euler-vault-kit/blob/master/docs/whitepaper.md},
  2024.
\newblock Accessed: 2026-04-24.

\bibitem{boneh2023applied}
Dan Boneh and Victor Shoup.
\newblock {\em A Graduate Course in Applied Cryptography}.
\newblock Draft (Version 0.6), 2023.
\newblock Available at \url{https://toc.cryptobook.us/}.

\bibitem{erc4337}
Vitalik Buterin, Yoav Weiss, Dror Tirosh, Shahaf Nacson, Alex Forshtat, Kristof
  Gazso, and Tjaden Hess.
\newblock {ERC}-4337: Account abstraction using alt mempool.
\newblock \url{https://eips.ethereum.org/EIPS/eip-4337}, 2021.
\newblock Accessed: 2026-06-11.

\bibitem{eip7702}
Vitalik Buterin, Sam Wilson, Ansgar Dietrichs, and Matt Garnett.
\newblock {EIP}-7702: Set code for {EOAs}.
\newblock \url{https://eips.ethereum.org/EIPS/eip-7702}, 2024.
\newblock Accessed: 2026-06-11.

\bibitem{canetti2020uc}
Ran Canetti, Rosario Gennaro, Steven Goldfeder, Nikolaos Makriyannis, and Udi
  Peled.
\newblock {UC} non-interactive, proactive, threshold {ECDSA} with identifiable
  aborts.
\newblock In {\em Proceedings of the 2020 ACM SIGSAC Conference on Computer and
  Communications Security (CCS '20)}, pages 1769--1787. Association for
  Computing Machinery, 2020.

\bibitem{chainalysis2026kelpdao}
{Chainalysis Team}.
\newblock Inside the {KelpDAO} bridge exploit: How {\textasciitilde}\$292
  million in {rsETH} was released against a non-existent burn.
\newblock
  \url{https://www.chainalysis.com/blog/kelpdao-bridge-exploit-april-2026/}, 23
  Apr.~2026.
\newblock Accessed: 2026-06-11.

\bibitem{cheng2019ekiden}
Raymond Cheng, Fan Zhang, Jernej Kos, Warren He, Nicholas Hynes, Noah Johnson,
  Ari Juels, Andrew Miller, and Dawn Song.
\newblock Ekiden: A platform for confidentiality-preserving, trustworthy, and
  performant smart contracts.
\newblock In {\em 2019 IEEE European Symposium on Security and Privacy
  (EuroS\&P)}, pages 185--200, 2019.

\bibitem{chuang2026tee}
Jalen Chuang, Alex Seto, Nicolas Berrios, Stephan van Schaik, Christina Garman,
  and Daniel Genkin.
\newblock Tee. fail: Breaking trusted execution environments via ddr5 memory
  bus interposition.
\newblock In {\em 47th IEEE Symposium on Security and Privacy (IEEE S\&P’26).
  IEEE Computer Society}, 2026.

\bibitem{circle-cctp-finality}
{Circle}.
\newblock Finality and block confirmations.
\newblock Circle CCTP Documentation, 2026.
\newblock Accessed: 2026-06-09.

\bibitem{coinbasebatching}
{Coinbase}.
\newblock Reflections on {Bitcoin} transaction batching.
\newblock
  \url{https://www.coinbase.com/blog/reflections-on-bitcoin-transaction-batching},
  2020.
\newblock Accessed: 2026-06-23.

\bibitem{cexdex2026report}
CoinGecko.
\newblock Cex \& dex trading activity report 2026, 2026.
\newblock Accessed: 2026-03-24.

\bibitem{coingeckoexchanges}
{CoinGecko}.
\newblock Top crypto exchanges ranked by trust score.
\newblock \url{https://www.coingecko.com/en/exchanges}, 2026.
\newblock Accessed: 2026-04-29.

\bibitem{damgrard20fastthreshold}
Ivan Damg\r{a}rd, Thomas~Pelle Jakobsen, Jesper~Buus Nielsen, Jakob~Illeborg
  Pagter, and Michael~B\ae{}ksvang \O{}stergaard.
\newblock Fast threshold ecdsa with honest majority.
\newblock In {\em Security and Cryptography for Networks: 12th International
  Conference, SCN 2020, Amalfi, Italy, September 14–16, 2020, Proceedings},
  page 382–400, Berlin, Heidelberg, 2020. Springer-Verlag.

\bibitem{stablellama}
DefiLlama.
\newblock {DefiLlama: Stablecoins}.
\newblock \url{https://defillama.com/stablecoins}, 2026.
\newblock Accessed: 2026-04-24.

\bibitem{defillamastaking}
DefiLlama.
\newblock {DefiLlama: Staking}.
\newblock \url{https://defillama.com/lsd}, 2026.
\newblock Accessed: 2026-04-24.

\bibitem{morpho}
Mathis~Gontier Delaunay, Paul Frambot, Quentin Garchery, Merlin Egalite, and
  Adrien Tabarly.
\newblock Morpho {B}lue whitepaper.
\newblock
  \url{https://github.com/morpho-org/morpho-blue/blob/main/morpho-blue-whitepaper.pdf},
  2023.
\newblock Accessed: 2026-04-29.

\bibitem{fireblocks}
{Fireblocks}.
\newblock Fireblocks: Digital asset custody and {MPC} wallet infrastructure.
\newblock \url{https://www.fireblocks.com}, 2026.
\newblock Accessed: 2026-06-10.

\bibitem{gast2025counterseveillance}
Stefan Gast, Hannes Weissteiner, Robin~Leander Schr{\"o}der, and Daniel Gruss.
\newblock Counterseveillance: Performance-counter attacks on amd sev-snp.
\newblock In {\em Network and Distributed System Security (NDSS) Symposium
  2025}, 2025.

\bibitem{axelar}
Sergey Gorbunov and Georgios Vlachos.
\newblock Axelar network: Connecting applications with blockchain ecosystems.
\newblock \url{https://docs.axelar.dev/axelar_whitepaper.pdf}, 2021.

\bibitem{chainflip}
Simon Harman.
\newblock Chainflip protocol whitepaper.
\newblock \url{https://assets.chainflip.io/whitepaper.pdf}, 2023.
\newblock Fifth Revision. Accessed: 2026-04-29.

\bibitem{10.1145/3212734.3212736}
Maurice Herlihy.
\newblock Atomic cross-chain swaps.
\newblock In {\em Proceedings of the 2018 ACM Symposium on Principles of
  Distributed Computing}, PODC '18, page 245–254, New York, NY, USA, 2018.
  Association for Computing Machinery.

\bibitem{hyperlane}
{Hyperlane}.
\newblock Hyperlane: Permissionless interoperability for any blockchain.
\newblock \url{https://docs.hyperlane.xyz}, 2023.
\newblock Accessed: 2026-04-29.

\bibitem{intel_sgx}
{Intel Corporation}.
\newblock {Intel Software Guard Extensions (Intel SGX) Developer Reference}.
\newblock
  \url{https://www.intel.com/content/www/us/en/developer/tools/software-guard-extensions/overview.html},
  2023.
\newblock Accessed: 2026-04-28.

\bibitem{komlo2020frost}
Chelsea Komlo and Ian Goldberg.
\newblock Frost: Flexible round-optimized schnorr threshold signatures.
\newblock In {\em Selected Areas in Cryptography: 27th International
  Conference, Halifax, NS, Canada (Virtual Event), October 21-23, 2020, Revised
  Selected Papers}, page 34–65, Berlin, Heidelberg, 2020. Springer-Verlag.

\bibitem{krause20251}
David Krause.
\newblock The \$1.4 billion {Bybit} hack: Cybersecurity failures and the risks
  of cryptocurrency deregulation.
\newblock {\em International Journal of Cryptocurrency Research},
  5(1):10--51483, 2025.

\bibitem{oasis2026privana}
Nino Kutnjak.
\newblock Privana: A practical liquefaction implementation, April 2026.

\bibitem{lindell2017fast}
Yehuda Lindell.
\newblock Fast secure two-party {ECDSA} signing.
\newblock In {\em Advances in Cryptology -- CRYPTO 2017}, pages 613--644.
  Springer, 2017.

\bibitem{10.1145/501983.502017}
Silvio Micali, Kazuo Ohta, and Leonid Reyzin.
\newblock Accountable-subgroup multisignatures: extended abstract.
\newblock In {\em Proceedings of the 8th ACM Conference on Computer and
  Communications Security}, CCS '01, page 245–254, New York, NY, USA, 2001.
  Association for Computing Machinery.

\bibitem{airswap}
{Michael Oved, Don Mosites}.
\newblock Swap: A peer-to-peer protocol for trading ethereum tokens, 2017.

\bibitem{near_threshold_signatures_2026}
{NEAR}.
\newblock Threshold signatures.
\newblock
  \url{https://github.com/near/mpc/tree/main/crates/threshold-signatures},
  2026.
\newblock Accessed: 2026-04-29.

\bibitem{nearchainsig}
{NEAR Foundation}.
\newblock Chain signatures.
\newblock \url{https://docs.near.org/chain-abstraction/chain-signatures}, 2026.
\newblock Accessed: 2026-06-11.

\bibitem{oasis2023sapphire}
{Oasis Protocol Foundation}.
\newblock Sapphire paratime, 2026.
\newblock Accessed Jan.~2026.

\bibitem{particle}
{Particle Network}.
\newblock Particle network: Chain abstraction via universal accounts.
\newblock \url{https://whitepaper.particle.network}, 2024.
\newblock Accessed: 2026-06-11.

\bibitem{polyhedra}
{Polyhedra Network}.
\newblock Polyhedra network zkbridge.
\newblock \url{https://docs.zkbridge.com/}, 2023.
\newblock Accessed: 2026-04-29.

\bibitem{rangoAPI}
{Rango}.
\newblock {Choosing the Right API}.
\newblock
  \url{https://docs.rango.exchange/api-integration/choosing-the-right-api#main-api-multi-step-txs}.
\newblock Accessed: 2026-06-10.

\bibitem{RMPocalypse2025}
Benedict Schlüter and Shweta Shinde.
\newblock Rmpocalypse: How a catch-22 breaks amd sev-snp.
\newblock In {\em Proceedings of the 2025 on ACM SIGSAC Conference on Computer
  and Communications Security}, CCS '25. Association for Computing Machinery,
  2025.

\bibitem{solana_durable_nonces}
{Solana Foundation}.
\newblock Durable nonces.
\newblock \url{https://solana.com/docs/core/transactions/durable-nonces}, 2026.
\newblock Accessed: 2026-04-28.

\bibitem{tas2023bitcoin}
Ertem~Nusret Tas, David Tse, Fangyu Gai, Sreeram Kannan, Mohammad~Ali
  Maddah-Ali, and Fisher Yu.
\newblock Bitcoin-enhanced proof-of-stake security: Possibilities and
  impossibilities.
\newblock In {\em 2023 IEEE Symposium on Security and Privacy (SP)}, pages
  126--145. IEEE, 2023.

\bibitem{tas2022babylon}
Ertem~Nusret Tas, David Tse, Fisher Yu, and Sreeram Kannan.
\newblock Babylon: Reusing bitcoin mining to enhance proof-of-stake security.
\newblock {\em arXiv preprint arXiv:2201.07946}, 2022.

\bibitem{tetherprotocols}
{Tether}.
\newblock Supported protocols and integration guidelines.
\newblock \url{https://tether.to/en/supported-protocols/}, 2026.
\newblock Accessed: 2026-04-29.

\bibitem{thorchain}
{THORChain}.
\newblock {THORChain}: A decentralised liquidity network.
\newblock
  \url{https://github.com/thorchain/Resources/blob/master/Whitepapers/whitepaper-en.md},
  2020.
\newblock Accessed: 2026-04-29.

\bibitem{thyagarajan2022universal}
Sri~AravindaKrishnan Thyagarajan, Giulio Malavolta, and Pedro Moreno-Sanchez.
\newblock Universal atomic swaps: Secure exchange of coins across all
  blockchains.
\newblock In {\em 2022 IEEE symposium on security and privacy (SP)}, pages
  1299--1316. IEEE, 2022.

\bibitem{erc20standard}
Fabian Vogelsteller and Vitalik Buterin.
\newblock {ERC-20: Token Standard}, November 2015.

\bibitem{wilke2024tdxdown}
Luca Wilke, Florian Sieck, and Thomas Eisenbarth.
\newblock Tdxdown: Single-stepping and instruction counting attacks against
  intel tdx.
\newblock In {\em Proceedings of the 2024 on ACM SIGSAC Conference on Computer
  and Communications Security}, pages 79--93, 2024.

\bibitem{0xprotocol}
{Will Warren, Amir Bandeali}.
\newblock 0x: An open protocol for decentralized exchange on the ethereum
  blockchain, 2017.

\bibitem{wormhole}
{Wormhole Foundation}.
\newblock Wormhole messaging.
\newblock \url{https://wormhole.com/docs/products/messaging/overview/}, 2022.
\newblock Accessed: 2026-04-29.

\bibitem{xie2022zkbridge}
Tiancheng Xie, Jiaheng Zhang, Zerui Cheng, Fan Zhang, Yupeng Zhang, Yongzheng
  Jia, Dan Boneh, and Dawn Song.
\newblock zkbridge: Trustless cross-chain bridges made practical.
\newblock In {\em Proceedings of the 2022 ACM SIGSAC Conference on Computer and
  Communications Security}, pages 3003--3017, 2022.

\bibitem{yuan2025ciphersteal}
Yuanyuan Yuan, Zhibo Liu, Sen Deng, Yanzuo Chen, Shuai Wang, Yinqian Zhang, and
  Zhendong Su.
\newblock Ciphersteal: Stealing input data from tee-shielded neural networks
  with ciphertext side channels.
\newblock In {\em 2025 IEEE Symposium on Security and Privacy (SP)}, pages
  4136--4154. IEEE, 2025.

\bibitem{zarick2025layerzero}
Ryan Zarick, Bryan Pellegrino, Isaac Zhang, Thomas Kim, and Caleb Banister.
\newblock Layerzero, 2025.

\bibitem{zengo}
{Zengo}.
\newblock Zengo: The {MPC} crypto wallet.
\newblock \url{https://zengo.com}, 2026.
\newblock Accessed: 2026-06-10.

\bibitem{zhang2024cross}
Mengya Zhang, Xiaokuan Zhang, Yinqian Zhang, and Zhiqiang Lin.
\newblock Cross-chain bridges: Attack taxonomy, defenses, and open problems.
\newblock In {\em RAID 2024}, 2024.

\end{thebibliography}
\ifSP \appendices
\crefalias{section}{appendix}
\else
\appendix \fi

\ifCCS
\input{Sections/ai_use}
\input{Sections/appendix/open_science}
\fi
\section*{Ethics Considerations}
We acknowledge that it is possible to deploy Crossroads in a privacy-preserving manner such that user transactions cannot be easily tracked across chains. We have not deployed Crossroads this way ourselves, but we have considered how this could be done responsibly. In particular, our use of TEEs on top of the signing committee members can be extended to allow for conditional release of information to regulators in special circumstances according to a policy (as opposed to broad access to all activity).

\section{Other Applications}\label{appdx:other_applications}

This section describes additional applications for Crossroads.

\subsection{Wallets and Liquidity}

\mypara{Chain-agnostic wallet.} Any asset from any blockchain integrated into Crossroads can be managed from a single backend blockchain wallet. When making purchases in an asset the wallet doesn't already own, frontend software can abstract away the swap step to provide a payment experience denominated in the user's asset of choice. Because every user's holdings across all integrated chains are recorded in a single backend ledger, wallets and block explorers can display a user's complete cross-chain portfolio and activity from one source, rather than querying and aggregating a separate full node for each chain.

\mypara{Universal receive addresses.} Crossroads can support \textit{receiving} assets from arbitrary integrated chains and automatically converting the payments to the asset of the recipient's choice. Most donation recipients and merchants only accept a few different blockchain asset types. Using custom deposit addresses as described in~\Cref{subsec:optimized-deposits}, a recipient can instead designate a preferred asset (e.g., Ether). Crossroads can derive receive addresses for all integrated blockchains, binding these addresses to a contract that automatically swaps received funds for the recipient's preferred asset.

\subsection{Universal Assets}

\mypara{Universal stablecoins.} Stablecoins are a class of cryptocurrencies designed to maintain a stable value relative to a reference asset, typically USD. The market has grown rapidly, with aggregate capitalization exceeding \$300 billion, of which roughly \$250 billion is concentrated in USDC and USDT, the two largest stablecoins~\cite{stablellama}. Although both are deployed on multiple chains,  chain support is determined by a centralized issuer and therefore remains selective: USDT, for example, is only available on 13 blockchains despite the existence of over 150 active ones~\cite{tetherprotocols}.

Crossroads enables the construction of a universal stablecoin: a stablecoin issued on the backend blockchain that can be spent on any smart-contract-enabled chain integrated with Crossroads. To extend support to a new chain, anyone can deploy a smart contract on that chain that mints stablecoins to a user upon presentation of a valid withdrawal signature from the Crossroads address.

\mypara{Cross-chain staking.} Over \$34 billion is currently staked on on-chain protocols~\cite{defillamastaking}, but this capital only secures protocols on the chain where it is held. Large-liquidity chains benefit, while emerging chains struggle to attract stake sufficient to secure their protocols. Babylon addresses this problem by enabling use of BTC for staking against equivocation on other chains~\cite{tas2022babylon,tas2023bitcoin}. Crossroads offers an alternative, enabling a cross-chain staking model using assets \textit{from any integrated chain} for \textit{any malfeasance observable by an oracle}.

Concretely, a staking contract on the backend blockchain holds a user's wrapped assets as collateral and receives oracle updates from the target chain. When the oracle relays evidence of misbehavior, the responsible actor's collateral is slashed.

\mypara{Cross-chain lending and borrowing.} Lending and borrowing protocols are widespread in blockchain settings, commonly used to short assets or arbitrage price differences across markets. Crossroads enables assets from any integrated chain to serve as collateral to borrow any other integrated asset, making atomic cross-chain shorts and arbitrage practical. Standard lending contracts such as Aave~\cite{aave} or Euler~\cite{euler} can be deployed on the backend blockchain to enable this functionality.

\subsection{Privacy and Compliance}

\mypara{Private management of public assets.} Institutional asset management requires confidentiality: portfolios, trade timing, and counterparty relationships are commercially sensitive. Yet most crypto liquidity lives on public blockchains, where every transaction is visible. Today, institutions rely on off-chain custodians or centralized exchanges, exposing themselves to custody fees and losing access to smart contract functionality.

Crossroads instead enables institutional management of public blockchain assets using a fully-featured private backend blockchain. Additionally, Crossroads enables deployment of funds in decentralized finance protocols even while they are custodied in Crossroads, allowing institutions to take advantage of the pooled liquidity of public blockchains while privately enforcing access controls and ownership changes.

\mypara{On-chain compliance.} Compliance policies in Crossroads can be implemented as smart contracts and layered directly onto exchange transactions, for example by enforcing KYC-based whitelisting that restricts participation to vetted counterparties. Adherence to these policies is then publicly verifiable on-chain, rather than enforced by an opaque centralized authority, and the policies compose with the private applications described above.

\section{Additional Proofs}\label{appdx:proofs}

\begin{proof}[Proof of Theorem~\ref{thm:soundness}]
We proceed in three parts. First, we establish an invariant showing that the smart contract accurately tracks the users $\mathsf{Net}$ balance. Second, we show the encumbered address holds sufficient native assets on the integrated chain to support withdrawals by any user. Third, we construct a sequence of $\mathsf{acct}$-issuable operations such that any valid transcript extending $\vec{\sigma}$ that contains the sequence in order includes a valid broadcast of the requested withdrawal transaction.

\paragraph{Part 1: Internal state accuracy.}
Let $N_{\mathsf{acct}} = \mathsf{Net}(\mathsf{acct}, \mathsf{chain}, \vec{\sigma})$. We assert that the contract's internal state variables, $\mathsf{bal}$ and $\mathsf{PW}$, maintain the invariant:
\[
\mathsf{bal}[(\mathsf{acct}, \mathsf{chain})] + \mathsf{PW}[\mathsf{acct}, \mathsf{chain}] = N_{\mathsf{acct}}
\]
We prove this by induction on the length of $\vec{\sigma}$.
\begin{itemize}
    \item \textbf{Base Case:} Before any operations are executed, $\vec{\sigma} = \emptyset$. $D \gets \emptyset$, $\mathsf{bal} = 0$, $\mathsf{PW} = 0$, and $N_{\mathsf{acct}} = 0$. The invariant holds trivially.
    \item \textbf{Inductive Step:} Assume the invariant holds for a prefix $\vec{\sigma}_k$. We case-split on the $k+1$-th operation:
    \begin{itemize}
        \item \textsf{Deposit}: If credited to $\mathsf{acct}$, the contract increments $\mathsf{bal}[(\mathsf{acct}, \mathsf{chain})]$ by $\mathsf{amt}$. Simultaneously, this operation enters $\mathsf{Dep}(\mathsf{acct})$, increasing $N_{\mathsf{acct}}$ by $\mathsf{amt}$. The invariant is preserved.
        \item \textsf{Transfer}: If sent by $\mathsf{sdr}$ to $\mathsf{rcv}$, $\mathsf{bal}[(\mathsf{sdr}, \mathsf{chain})]$ decreases by $\mathsf{amt}$ and $\mathsf{bal}[(\mathsf{rcv}, \mathsf{chain})]$ increases by $\mathsf{amt}$. On the right side of the equation, $N_{\textsf{sdr}}$ decreases via $\mathsf{Snd}$ by $\mathsf{amt}$ and $N_{\textsf{rcv}}$ increases via $\mathsf{Rcv}$ by $\mathsf{amt}$. The invariant is preserved for both.
        \item \textsf{LockWithdrawal}: Initiated by $\mathsf{acct}$, this decreases $\mathsf{bal}$ by $\mathsf{amt}$ and increases $\mathsf{PW}$ by $\mathsf{amt}$. The sum $\mathsf{bal} + \mathsf{PW}$ remains unchanged, matching $N_{\mathsf{acct}}$ which is unaffected by this internal state transition. The invariant is preserved.
        \item \textsf{ConfirmWithdrawal}: Attributed to $\textsf{acct}$, this deducts $(\textsf{amt} + \textsf{fee})$ from $\textsf{PW}$. This finalized withdrawal enters $\textsf{With}(\textsf{acct})$, reducing $N_{\textsf{acct}}$ by the same $\textsf{amt} + \textsf{fee}$. The invariant is preserved.
        \item \textsf{Other}: Other operations modify none of the relevant variables, leaving the invariant intact.
    \end{itemize}
\end{itemize}
Thus the internal ledger accurately tracks the abstract $\mathsf{Net}$.

\paragraph{Part 2: On-chain balance consistency.}
Let $\textsf{OnChain}(\mathsf{chain})$ denote the native balance of the crossroads address on $\mathsf{chain}$. 
By Oracle Soundness, every recorded $\textsf{Deposit}$ and $\textsf{ConfirmWithdrawal}$ corresponds to a finalized on-chain transfer to or from the address. Assuming every spend from the crossroads address is reflected as a $\textsf{ConfirmWithdrawal}$ in $\vec{\sigma}$ (which we establish at the end of this proof), unrecorded finalized deposits may only inflate $\textsf{OnChain}(\mathsf{chain})$ above what $\vec{\sigma}$ accounts for. Hence:
\[
\textsf{OnChain}(\mathsf{chain}) \;\ge\; \sum_{\mathsf{acct} \in \mathcal{U}} N_{\mathsf{acct}}
\]
where $\textsf{Transfer}$ operations cancel in the sum over $\mathcal{U}$. Since each $N_{\mathsf{acct}} \ge 0$, for every $\mathsf{acct}$ we have $\textsf{OnChain}(\mathsf{chain}) \ge N_{\mathsf{acct}}$.

\paragraph{Part 3: Fulfilling the soundness guarantee.}
Let $\mathsf{acct}, \mathsf{chain}, \mathsf{addr}_{\mathsf{dest}}, \mathsf{amt}, \mathsf{fee}$ be as in Definition~\ref{def:soundness}, so that $\mathsf{Net}(\mathsf{acct}, \mathsf{chain}, \vec{\sigma}) \ge \mathsf{amt} + \mathsf{fee}$ and $\mathsf{bal}[(\mathsf{acct}, \eta)] \ge r$. By Part 1, we know $\mathsf{bal}[(\mathsf{acct}, \mathsf{chain})] + \mathsf{PW}[\mathsf{acct}, \mathsf{chain}] \ge \mathsf{amt} + \mathsf{fee}$. By Part 2, $\textsf{OnChain}(\mathsf{chain}) \ge N_{\mathsf{acct}} \ge \mathsf{amt} + \mathsf{fee}$, so the Crossroads-controlled address on $\mathsf{chain}$ holds sufficient native assets to back this withdrawal. Let $\ell = \mathsf{PW}[\mathsf{acct}, \mathsf{chain}]$ denote the amount already locked by $\mathsf{acct}$ on $\mathsf{chain}$, and set $\Delta = \max(0, (\mathsf{amt} + \mathsf{fee}) - \ell)$. If $\Delta > 0$, then since $\mathsf{bal}[(\mathsf{acct}, \mathsf{chain})] + \ell \ge \mathsf{amt} + \mathsf{fee}$, we have $\mathsf{bal}[(\mathsf{acct}, \mathsf{chain})] \ge \Delta$.

$\mathsf{acct}$ then executes the following sequence:

\begin{enumerate}
    \item \textbf{Locking Funds:} $\mathsf{acct}$ calls $\textsf{LockWithdrawal}(\mathsf{chain}, \Delta)$ on the $\textsf{crsrds}$ contract. Both of the contract's assertions hold: $\mathsf{bal}[(\mathsf{acct}, \mathsf{chain})] \ge \Delta$ as argued above, and $\mathsf{bal}[(\mathsf{acct}, \eta)] \ge r$ by assumption. The call shifts $\Delta$ from $\mathsf{bal}$ into $\mathsf{PW}[\mathsf{acct}, \mathsf{chain}]$ and escrows the prover reward, adding $r$ to $\mathsf{RW}[\mathsf{acct}]$. (The call is made even when $\Delta = 0$: it then moves no chain balance, but still establishes the escrowed reward that \textsf{CanSign} requires.) After this step, $\mathsf{PW}[\mathsf{acct}, \mathsf{chain}] \ge \mathsf{amt} + \mathsf{fee}$ and $\mathsf{RW}[\mathsf{acct}] \ge r$.

    \item \textbf{Signature Generation:} The user invokes $\Pi_{\textsf{SC}}^{\textsf{TSS}}.\textsf{Sign}(\mathsf{chain}, \mathsf{amt}, \mathsf{fee}, \mathsf{addr}_{\mathsf{dest}})$. Each honest signer $s_j$ independently verifies $\textsf{crsrds}.\textsf{CanSign}(\mathsf{acct}, \mathsf{chain}, \mathsf{amt}, \mathsf{fee}) = 1$, which holds because $\mathsf{PW}[\mathsf{acct}, \mathsf{chain}] \ge \mathsf{amt} + \mathsf{fee}$ and $\mathsf{RW}[\mathsf{acct}] \ge r$ after Step~1, then fetches the current $\mathsf{epoch}_0$ via $\textsf{crsrds}.\textsf{GetEpoch}$, constructs $\mathsf{tx}$ with this epoch, recipient $\mathsf{addr}_{\mathsf{dest}}$, and value $\mathsf{amt}$, and produces a signature share $\sigma^{(j)} \gets \textsf{TSS}.\mathsf{Sign}(\textsf{sk}^{(j)}_{\mathsf{chain}}, \mathsf{tx})$. Since fewer than $t$ signers are corrupted and $n \ge 2t-1$, at least $t$ honest signers produce valid shares, so $\mathsf{acct}$ collects enough shares to run $\textsf{TSS}.\mathsf{Combine}$ successfully, obtaining $\sigma$.

    \item \textbf{On-Chain Settlement:} $\mathsf{acct}$ broadcasts $(\mathsf{tx}, \sigma)$ to $\mathsf{chain}$. By construction in Step~2, $\mathsf{tx}$ is well-formed with $\mathsf{recipient}(\mathsf{tx}) = \mathsf{addr}_{\mathsf{dest}}$ and $\mathsf{value}(\mathsf{tx}) = \mathsf{amt}$, the signature $\sigma$ verifies under $\textsf{pk}_{\mathsf{chain}}$, and $\mathsf{tx}$ carries the epoch $\mathsf{epoch}_0 = \textsf{crsrds}.\textsf{GetEpoch}(\mathsf{chain})$. By Part~2, the encumbered address on $\mathsf{chain}$ holds at least $\mathsf{amt} + \mathsf{fee}$ in native assets, sufficient to cover $\mathsf{value}(\mathsf{tx}) + \mathsf{fee}(\mathsf{tx})$. Thus $\textsf{Valid}_{\mathsf{chain}}(\mathsf{tx}, \mathsf{st}_{\mathsf{chain}}) = 1$ whenever $\mathsf{epoch}_0$ still matches the encumbered address's on-chain epoch at the moment of broadcast.
\end{enumerate}

Because $\mathsf{acct}$ could unilaterally execute and authorize this valid sequence of operations, the system satisfies Definition~\ref{def:soundness}.

A competing withdrawal settling first would advance that on-chain epoch, superseding $\mathsf{tx}$; by single-choice inclusion (Section~\ref{sec:requirements}) the superseded $\mathsf{tx}$ becomes permanently invalid while $\mathsf{acct}$'s pending-withdrawal balance $\mathsf{PW}[\mathsf{acct}, \mathsf{chain}]$ is untouched, so $\mathsf{acct}$ re-fetches the new epoch via $\textsf{crsrds}.\textsf{GetEpoch}$, requests a fresh signature as in Step~2 (the $\textsf{CanSign}$ check still passes, since $\mathsf{PW}$ and $\mathsf{RW}$ are unchanged), and rebroadcasts. Each such competing withdrawal is itself a finalized spend from the encumbered address and therefore, by fee replacement (Section~\ref{sec:requirements}), pays a positive on-chain cost $c > 0$; moreover, to win the race against $\mathsf{acct}$ it must outbid $\mathsf{acct}$'s fee. A resource-bounded adversary with budget $B$ can thus finalize at most $k \le B / c$ competing withdrawals, after which no adversarial withdrawal supersedes $\mathsf{tx}$ and $\mathsf{acct}$'s broadcast satisfies $\textsf{Valid}_{\mathsf{chain}}(\mathsf{tx}, \mathsf{st}_{\mathsf{chain}}) = 1$. Hence within $k+1$ attempts $\mathsf{acct}$ broadcasts a valid withdrawal transaction, as required by Definition~\ref{def:soundness}.

It remains to discharge the assumption used in Part~2 that every spend from the encumbered address corresponds to a $\textsf{ConfirmWithdrawal}$ entry in $\vec{\sigma}$. An unrecorded spend would require a valid signature on some transaction $\mathsf{tx}'$ that was \emph{not} authorized via $\textsf{crsrds}.\textsf{CanSign}$; were such a transaction to settle on $\mathsf{chain}$, it would drain the encumbered address below $\sum_{\mathsf{acct}} N_{\mathsf{acct}}$ and break the on-chain consistency that $\mathsf{acct}$'s withdrawal in Part~3 relies on. In $\Pi_{\textsf{SC}}^{\textsf{TSS}}$, every honest signer checks $\textsf{CanSign}$ before producing a share, so no honest share on an unauthorized transaction is ever issued by the signing committee. An adversary corrupting fewer than $t$ signers therefore holds at most $t-1$ valid shares on $\mathsf{tx}'$. By the security of $\Pi_{\textsf{DKG}}$, the joint distribution of the adversary's view and the honest parties' key shares is indistinguishable from one produced by an ideal trusted dealer, so producing a signature that satisfies $\textsf{TSS}.\mathsf{Verify}$ reduces to an EUF-CMA forgery against $\textsf{TSS}$, which occurs with negligible probability.
\end{proof}

\section{Denial-of-service attacks}
\label{subsec:dos}

Since Crossroads withdrawals on a given chain originate from the same encumbered address, an adversary can attempt to throttle the system by repeatedly initiating withdrawals. By submitting back-to-back withdrawal transactions carrying high fees, the adversary forces honest users to either match those fees to compete for inclusion in subsequent epochs or to wait until the adversary runs out of money. With a sufficiently funded adversary, this can stall withdrawals or make them prohibitively expensive.

We mitigate this attack with two of the mechanisms introduced in~\Cref{sec:faster-withdrawals}. First, with an intent-based withdrawal framework, users do not have to go through the encumbered address at all: when a third-party filler is willing to advance funds on the destination chain in exchange for the user's Crossroads tokens, the user's withdrawal completes via the filler's address rather than competing for the encumbered account's epochs. Second, where the chain supports it, parallelizing withdrawals removes the single-epoch bottleneck altogether. UTXO-based chains naturally support parallel spends through multi-output transactions, while smart-contract-enabled chains can multiplex withdrawals through a dispatcher contract.

\section{Signing Latency Evaluation}
\label{appdx:signing-latency}

\Cref{tab:crossroads-asset-chain-cost} compares Crossroads deposits and withdrawals to ordinary native transfers on each asset chain. The overhead of the transaction binding adds an extra 31 virtual bytes on Bitcoin and 320 extra gas on Ethereum, with no additional cost on Solana.

\begin{table}[t]
	\centering
	\setlength{\tabcolsep}{0.80em}
	\renewcommand{\arraystretch}{1.10}
	\begin{tabular}{@{}c@{\hspace{0.85em}}|@{\hspace{1.10em}}lll@{}}
		\Xhline{0.8pt}
		Asset chain & Native transfer & \makecell[c]{Crossroads\\deposit} & \makecell[c]{Crossroads\\withdrawal} \\
		\hline
		\multirow{2}{*}{Ethereum}
		& \$0.009073 & \$0.009073 & \$0.009211 \\[0.04em]
		& (21{,}000 gas) & (21{,}000 gas) & (21{,}320 gas) \\[0.58em]
		\multirow{2}{*}{Bitcoin}
		& \$0.3473 & \$0.4236 & \$0.4236 \\[0.04em]
		& (141 vB) & (172 vB) & (172 vB) \\[0.58em]
		\multirow{2}{*}{Solana}
		& \$0.000320 & \$0.000320 & \$0.000320 \\[0.08em]
		& (1 sig) & (1 sig) & (1 sig) \\
		\Xhline{0.8pt}
	\end{tabular}
	\caption{Asset chain transaction costs in USD, comparing ordinary transfers on each chain to
		Crossroads deposit and withdrawal transactions. Transaction fee prices are from June 10, 2026:
		0.27 gwei on Ethereum, 4 satoshis/vB on Bitcoin, and the base fee of 5,000 lamports on Solana.}
	\label{tab:crossroads-asset-chain-cost}
\end{table}

~\Cref{tab:impl-signing-latency} reports the threshold ECDSA signing latency in our prototype by committee size and round-trip latency. We emulate the network latency between processes on the same machine to give precise answers, and these measurements use all $n$ members ($t = n$). Threshold ECDSA is the slower of the two signature schemes.

\begin{table}[t]
	\centering
	\scalebox{0.9}{%
	\begin{tabular}{c|r|r|r|r|r|r|r}
		Network \\ Latency\\/ Committee \\ Size & 0ms & 10ms & 50ms & 100ms & 250ms & 500ms & 1000ms \\\hline
		3  & 0.01 & 0.10 & 0.42 & 0.82 & 2.03 & 4.04 & 8.04 \\
		5  & 0.02 & 0.17 & 0.75 &  1.44 & 3.56 & 7.07 & 14.07 \\
		7  & 0.04 & 0.25 & 1.06 & 2.08 & 5.07 & 10.09 & 20.10 \\
		9  & 0.05 & 0.32 & 1.40 & 2.71 & 6.62 & 13.13 & 26.12 \\
		11 & 0.08 & 0.40 & 1.71 & 3.34 & 8.15 & 16.15 & 32.17 \\
		13 & 0.16 & 0.51 & 2.06 & 3.97 & 9.68 & 19.21 & 38.20 \\
		15 & 0.23 & 0.59 & 2.40 & 4.61 & 11.22 & 22.24 & 44.25 \\
	\end{tabular}
	}
	\caption{ECDSA signing latency measured in seconds, varied by signing committee size and round-trip connection latency between committee members. Each number is the average of five trials.}
	\label{tab:impl-signing-latency}
\end{table}

\end{document}